\documentclass[a4paper,10pt]{article}

	\usepackage{enumerate,enumitem,hyperref,siunitx}
	\usepackage{bm,amsmath,amssymb,amsthm,physics}
	\usepackage{float,authblk,setspace,lineno}
	\usepackage{cite,graphicx}
	\usepackage[symbol,bottom]{footmisc}
    \usepackage{combelow}
	\usepackage[font=small,labelfont=bf,figurename=Figure,labelsep=period]{caption}
	\usepackage[margin=1in]{geometry}
	\usepackage[capitalize]{cleveref}
    \usepackage{svg}
    \usepackage{siunitx}
    \usepackage{booktabs,xcolor}
    \newtheorem{lemma}{Lemma}
    \newtheorem{proposition}{Proposition}

	\newenvironment{keywords}
    {\footnotesize \textbf{Keywords:} }
    { \vspace{0.2cm} }

\begin{document}

	\title{Adaptive therapy under parametric, structural, and measurement uncertainty}
    


	\author[1,2*]{Alexander P Browning}
    \author[3]{Rebecca M Crossley}
    \author[4]{Ryan J Murphy}
    \author[3]{Helen Byrne}
	\author[5*]{Sara Hamis}
	\affil[1]{School of Mathematics and Statistics, University of Melbourne, Australia}
    \affil[2]{ARC Centre of Excellence for the Mathematical Analysis of Cellular Systems, University of Melbourne, Australia}
	\affil[3]{Mathematical Institute, University of Oxford, Oxford, United Kingdom}
	\affil[4]{School of Mathematical Sciences, Adelaide University, Adelaide, Australia}
	\affil[5]{Department of Information Technology, Uppsala University, Uppsala, Sweden}
	

\date{\today}
\maketitle
\footnotetext[1]{Corresponding authors: apbrowning@unimelb.edu.au and sara.hamis@it.uu.se.}


	\begin{abstract}
		\noindent Adaptive therapy has emerged as a promising treatment strategy that exploits within-tumour competition to delay disease progression. Implementation, however, typically relies on indirect measurements of tumour burden and must account for potentially substantial patient heterogeneity. In this work, we capture patient-to-patient variability, parameter uncertainty, and imperfect biomarker measurements with a mathematical and statistical model that we calibrate to clinical prostate cancer data using a Bayesian inference framework. We use the resulting virtual cohort to demonstrate that, within the simple but now well-established Lotka-Volterra-based model, adaptive therapy robustly improves time-to-progression for the subset of patients that are predicted to eventually progress by the model. To account for other risk factors associated with larger tumour volumes, we introduce a new metric based on the risk of metastasis that demonstrates how adaptive therapy may be disadvantageous when sustained tumour burden is also considered. Given the ubiquity of uncertainty in oncology, we then describe several future modelling directions that also capture uncertainty in the temporal evolution of the underlying tumour or biomarker dynamics. Finally, we demonstrate how model misspecification and non-identifiability can lead to unreliable predictions, especially if uncertainty is inadequately captured.
	\end{abstract}

	\begin{keywords}
		adaptive therapy, prostate cancer, androgen deprivation therapy, inference, clinical data
	\end{keywords}

\section{Introduction}

The emergence of resistance presents a significant challenge to cancer treatment, and is thought to be responsible for the majority of cancer-related deaths \cite{Ingham.2025}. Standard cytotoxic treatments prioritise cell killing through continuous application of a maximum tolerated dose, often with a view toward tumour elimination \cite{Gatenby.2020}. This approach, however, presents a strong selective pressure in favour of intrinsically (or evolved) resistant subpopulations. In such cases, an initially strong response to treatment is often followed by recurrence and progression to a treatment-unresponsive tumour \cite{Gatenby.2020}.

Typical approaches to deal with resistance in cancers include combination therapies (e.g. multidrug therapies or chemoimmunotherapy), and the development of new drugs that target the mechanisms driving resistance \cite{Vasan.2019}. Evolutionary therapies, meanwhile, have been proposed to leverage the competitive dynamics between sensitive and resistant subpopulations to trade off tumour elimination for the long-term suppression of resistance \cite{Gatenby.2009a,Gatenby.2009b}. Adaptive therapy, in which dosing is adjusted or withdrawn based on tumour response, has been shown to extend the time-to-progression (TTP) in animal models of ovarian cancer \cite{Gatenby.2009b}, breast cancer \cite{Enriquez-Navas.2016}, and melanoma \cite{Smalley.2019}. More promisingly, small-scale human trials of intermittent dosing for metastatic prostate cancer, with scheduling determined from peripheral biomarker data, demonstrate improved TTP following delivery of less than 50\% of the cumulative dose administered compared to those on continuous therapy \cite{Zhang.2017}. Adaptive therapy is also being tested in ovarian cancer, with a phase II randomised trial adapting carboplatin dose using tumour response in relapsed, platinum-sensitive high-grade serous or endometrioid disease \cite{Mukherjee2024}. However, less explored is the potential for adaptive therapies to increase other risk factors associated with relatively large tumour volumes. Outcomes aside, the lower cumulative dose received in adaptive therapy is associated with many advantages, including reduced financial costs, reduced toxicity, and potentially an improved quality of life \cite{Goldenberg.1995,Crook.2012,Jaswal.2015}. Perhaps most importantly, adaptive therapy does not require the development of new drugs and can potentially be implemented within the confines of existing treatment protocols: in prostate cancer, for example, intermittent therapy is already a recommended treatment option in some cases \cite{Virgo.2023}. 

From a practical perspective, implementation of adaptive therapy relies on feedback between the tumour size and the dosing schedule. For this reason, clinical trials of adaptive therapy have focused on cancers with a biomarker proxy for tumour size, such as prostate-specific antigen (PSA) in prostate cancer \cite{Zhang.2017,Zhang.2022} and CA125 in ovarian cancer \cite{Mukherjee2024}. The prostate cancer adaptive schedule of Zhang \cite{Zhang.2017}, referred to as ``AT50'' \cite{Gallagher.2024}, aims to maintain a gross tumour volume (GTV) within 50 to 100\% of its initial size; in practice this means adaptively applying or withholding treatment to control the patient's PSA level, used as a measurable proxy for GTV. Peripheral or circulating PSA can be obtained relatively non-invasively through weekly or monthly blood draws, however, is known to be only weakly correlated to GTV \cite{Babaian.1995,Eastham.2003}. In particular, PSA is produced by both cancerous prostate cells and healthy prostate tissue, such that elevated PSA levels may also be associated with, for example, urinary tract infections \cite{Ulleryd.1999}. However, GTV can only be obtained through biopsy or imaging \cite{Matsugasumi.2015}. Given that measurements must be taken frequently enough that dosing can be adjusted in response to changes in tumour growth, adaptive therapy for other cancers has been largely limited to in vitro and in vivo models, for which the necessarily accurate measurements of GTV can be taken \cite{Enriquez-Navas.2016}.

The limited and indirect clinical data available necessitates mathematical analysis with correspondingly simple mathematical models. For example, two-species ordinary differential equation (ODE) models that capture the dynamics of a sensitive and resistant cell population in response to treatment have been used to study adaptive therapy \cite{Zhang.2017,Greene.2019,Strobl.2021}, after being used to study intermittent therapy \cite{Ideta2008,Tanaka2010}. Many of these works also assume a deterministic coupling between the GTV and the PSA \cite{Zhang.2017,Brady-Nicholls.2020,Strobl.2021, Ideta2008}. Additionally, implicit in almost all ODE models is the assumption that the tumour dynamics remain fixed for a given patient: that is, that the model with fixed parameters is sufficient to describe the time-evolution of the tumour dynamics throughout a patient's entire course of treatment. Key questions, such as how measurement uncertainty and model misspecification affect the efficacy of adaptive therapy, have yet to be fully investigated.

Existing theoretical works that study variability in adaptive therapy outcomes \cite{Brady-Nicholls.2020,Strobl.2021} typically parameterise models using the Bruchovsky Phase II prostate cancer clinical trial for patients undergoing intermittent androgen deprivation therapy (ADT) in 2006 \cite{Bruchovsky.2006}. We emphasise that this trial was published several years before the rationale for adaptive therapy was presented by Gatenby in 2009 \cite{Gatenby.2009b}, and well before the first adaptive therapy trials (with results from the first fully randomised trial due in 2027 \cite{Zhang.2022}). Significant patient heterogeneity complicates parameterisation, motivating the generation of so-called ``virtual cohorts'' or ``virtual patients'' \cite{Scibilia.2025,Gallagher.2026}, for which plausible, simulated distributions of patient outcomes can be obtained. Large uncertainties are also expected, but not captured by traditional approaches that apply point-parameter estimates to predict adaptive therapy outcomes from clinical trial data for patients undergoing intermittent therapy according to a predetermined schedule.

In this work, we incorporate biomarker measurement and parameter uncertainty into a mathematical analysis of adaptive therapy. In particular, we ask whether adaptive therapy reliably produces better patient outcomes than standard treatments, and whether the patient-specific parameters needed to make such treatment decisions can be inferred from data. Additionally, we aim to demonstrate how temporal changes in tumour dynamics could be incorporated into parameterised mathematical models using time-varying parameters, with the goal of laying the foundation for future research into what we term ``robust~adaptive~therapy''. 

In \cref{sec:model}, we present the established Lotka-Volterra model of tumour dynamics and calibrate it to patient PSA data to construct a virtual cohort that accounts for both patient heterogeneity and parametric uncertainty. From this cohort, we then calculate the model-predicted probability of eventual relapse, and explore how uncertainty in PSA measurements and the relationship between this biomarker and tumour volume affects the ability of adaptive therapy to prolong TTP (\cref{results_ttp}). To  capture risk factors associated with potentially larger GTVs (e.g., metastasis), we introduce a new metric in \cref{sec:riskfactors} that demonstrates a potential trade-off that should be considered in treatment design. Turning from parametric to structural uncertainty, we next demonstrate future modelling approaches that capture dynamic variability in the tumour and PSA dynamics (\cref{sec:dynamic_var}). Given that many models are necessarily parameterised using data from patients who have not received adaptive therapy, in \cref{sec:transport} we perform an analysis to determine the extent to which reliable predictions can be made between treatments. Finally, we conclude with a prospective discussion about the role of uncertainty in adaptive therapies.

\section{Methods}

	\begin{figure}[b!]
	    \centering
	    \includegraphics[width=\textwidth]{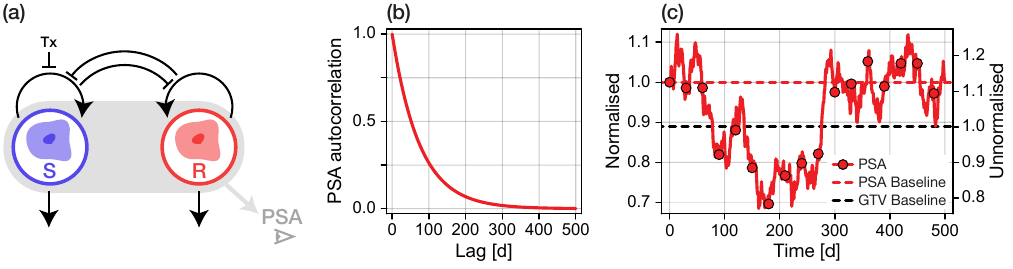}
	    \caption[Figure 1]{\textbf{Mathematical and statistical model of competitive prostate cancer progression.} (a) The Lotka-Volterra model of Strobl et al. \cite{Strobl.2021} assumes a heterogeneous tumour comprised of both drug-sensitive and drug-resistant cells. The proliferation rates of 
        both subpopulations is suppressed 
        due to crowding effects, and in the sensitive cell population due to treatment ($T_\mathrm{x}$). Prostate-specific antigen (PSA) is produced uniformly by all cell types, and forms a proxy measurement for the gross tumour volume (GTV). We assume that PSA measurements are noisy and temporally autocorrelated with a mean proportional to the GTV. We follow \cite{Strobl.2021} and normalise PSA measurements to a \textit{baseline} equal to each patient's initial PSA measurement. (b) \textit{Maximum-a-posteriori} (MAP) autocorrelation function for \textit{Patient 99}. (c) Simulated PSA trajectory for a patient with a constant normalised GTV measurement of unity. We show both a continuous sample (red continuous) and 30 day measurements (red discs). The normalised scale corresponds to a PSA baseline of 1.0, and the unnormalised scale corresponds to a GTV baseline of 1.0. }
	    \label{fig1}
	\end{figure}

    \subsection{Clinical dataset}

    We analyse the publicly available dataset associated with the Bruchovsky Phase II trial \cite{Bruchovsky.2006}. This trial contains data for 109 male prostate cancer patients treated with intermittent androgen suppression. Each patient received the anti-androgen cyproterone acetate as a lead-in therapy for four weeks, followed by a combination of leuprolide acetate and cyproterone acetate for a period of at most 36 weeks. If serum PSA levels were less than \SI{4}{\micro\gram\per\litre} at both 24 and 36 weeks (i.e., at both 168 and 252 days), treatment was withdrawn and PSA measured every four weeks. Treatment was restored when measured PSA had first increased to approximately \SI{10}{\micro\gram\per\litre}. This process was repeated over the course of the clinical trial. Patients whose measured PSA exceeded \SI{4}{\micro\gram\per\litre} at either 24 or 36 weeks were subsequently withdrawn from the study. For each patient, the data comprises the treatment schedule and the measured PSA at approximately four-week intervals. We exclude patients with fewer than five measurements either on or off treatment from our analysis, leaving only 85 patients for whom sufficient data was deemed available. 

	\subsection{Mathematical and statistical model}
	\label{sec:model}
	
	Here, we present a simple mathematical model for the underlying tumour dynamics, alongside a statistical model that describes variability in both measured and simulated PSA measurements. While the underlying mathematical model is deterministic, the statistical model induces stochasticity through an adaptive schedule dependent upon \textit{measured} PSA. In this section, we also define two metrics that we use to compare the performance of continuous and adaptive therapies.
		
	\subsubsection{Tumour dynamics mathematical model}
	
	We consider the Lotka-Volterra model of heterogeneous prostate cancer progression of Strobl et al. \cite{Strobl.2021}, illustrated in \cref{fig1}a. The model assumes that the tumour comprises both drug-sensitive (S) and drug-resistant (R) cells, with volumes denoted by $s(t)$ and $r(t)$, respectively, and with the gross tumour volume (GTV) denoted by $n(t) = s(t) + r(t)$. We assume that both populations occupy the same niche, such that competition is mediated through logistic growth with a fixed, patient-specific, carrying capacity $K$; we revisit this assumption later. In isolation, each cell population proliferates logistically with respective rates $\lambda_S$ and $\lambda_R$, with cell turnover mediated through apoptosis with density-independent rates of $\gamma_T$. Finally, we assume that resistance carries a cost, such that $\lambda_R < \lambda_S$ \cite{Gatenby.2009a} and that conversion of cells between sensitive and resistant types is negligible during treatment \cite{Strobl.2021}.

	Patients in our dataset receive ADT, which is known to target primarily androgen-dependent, drug-sensitive cells \cite{Karantanos.2013}. It is unclear whether ADT primarily inhibits proliferation or, as modelled by \cite{Strobl.2021}, induces death in both mother and daughter cells upon proliferation. We allow both possibilities, through an effective modification on $\lambda_S$ by a multiplicative factor of $1 - \gamma_D T_\mathrm{x}(t)$, where $T_\mathrm{x}(t) \in \{0,1\}$ is an indicator variable representing whether treatment is off or on, respectively, and $0 < \gamma_D < 2$. For $\gamma_D = 0$, the drug has no effect. If ADT inhibits proliferation, then we must have that $0 < \gamma_D < 1$, where $\gamma_D$ represents the factor by which proliferation is reduced, and $\gamma_D = 1$ corresponds to a complete cessation of proliferation in the presence of the drug. If ADT induces cell death upon proliferation, then the factor $0 < \gamma_D / 2 < 1$ represents the probability that a division event results in cell death, such that $0 < \gamma_D < 2$, which unlike the inhibition mechanism, allows for $\gamma_D > 1$. Altogether, these assumptions yield the ordinary differential equation model
	\begin{equation}\label{ode_model}
	\begin{aligned}
		\dv{s(t)}{t} &= \lambda_S s(t) \left(1 - \dfrac{n(t)}{K}\right) (1 - \gamma_D T_\mathrm{x}(t)) - \gamma_T s(t),\\
		\dv{r(t)}{t} &= \lambda_R r(t) \left(1 - \dfrac{n(t)}{K}\right) - \gamma_T r(t),
	\end{aligned}
	\end{equation}
	subject to the initial conditions $s(0) = s_0$ and $r(0) = r_0$.

	Measurements of GTV are not made directly, with adaptive therapy decisions being based on \textit{normalised} measurements of PSA. Additionally, we have no direct measurements of GTV, hence we set the units of GTV such that $K = 1$ for each patient; thus, $s_0 + r_0 \le 1$.

	\subsubsection{Prostate-specific antigen statistical model}
	
	PSA, the concentration of which we denote by $p(t)$, is an indirect biomarker for the GTV, $n(t)$. Two approaches are commonly used in the mathematical literature to relate GTV dynamics to measured PSA; both for model calibration and in the implementation of adaptive therapy. Strobl \cite{Strobl.2021}, and later Gallagher \cite{Gallagher.2024}, do not model PSA explicitly, but rather assume that $p(t)$ is directly proportional to $n(t)$. Brady-Nicholls \cite{Brady-Nicholls.2020}, on the other hand, model the secretion and subsequent degradation of PSA directly such that
		\begin{equation}\label{psa_ode}
			\dv{p(t)}{t} = \underbrace{\alpha n(t)}_{\text{Secretion}} - \underbrace{\beta p(t),}_{\text{Degradation}}
		\end{equation}
    where $\alpha\geq0$ is the secretion rate of PSA by cells and $\beta\geq0$ is the corresponding degradation rate.
	Implicit in both models is the assumption that there is a direct, deterministic relationship between GTV and PSA. While PSA has been observed to correlate with tumour burden \cite{Stamey.1987,Mandel.2021}, this relationship is subject to substantial biological variability and is influenced by factors beyond GTV, not least the volume of benign prostate tissue \cite{Vollmer.2003}. We take an intermediate approach that we revisit later, and assume that measured PSA levels follow a temporally correlated Gaussian process with mean given by the GTV. Such a temporal autocorrelation allows us to capture sustained discrepancy between GTV and PSA, and other possible sources of model misspecification \cite{Lambert.2023}. As such, we have that
		\begin{equation}
			p(t) - n(t) := \varepsilon(t) \sim \mathcal{N}(0,\sigma^2)	
		\end{equation}
	with
		\begin{equation}
			\mathrm{cor}(\varepsilon(t),\varepsilon(t + \delta t)) = \exp\left(-\dfrac{|\delta t| \log(2)}{\tau}\right),
		\end{equation}
	where $\sigma$ is the standard deviation, and $\tau$ denotes the lag at which the correlation drops to 1/2. For simplicity, we choose the natural units of $p(t)$ such that $\mathbb{E}(p(t)) = n(t)$ (i.e., the constant of proportionality is one). This approach is commonly used throughout the adaptive therapy literature, where thresholds are based on a baseline PSA, and analysis uses  a normalised PSA measurement, denoted by $\hat{p}(t)$, by dividing through by the initial concentration
	\begin{equation}
		\hat{p}(t) = p(t) / p_0,
	\end{equation}
	where we denote $p_0 := p(0)$ such that $\hat{p}(0) = 1$. In this way, we do not require the unknown scaling factor that relates units of PSA to GTV.
	
	To see how this model is intermediate to one in which PSA is modelled directly, we extend \cref{psa_ode} to allow for additional, white-noise-driven changes to PSA levels. This yields the It\^{o} stochastic differential equation
	\begin{equation}\label{psa_sde}
		\dd p = \big(\theta n(t) - \beta p(t) \big) \dd t + \eta\, \dd W_t,
	\end{equation}
	where $W_t$ is a Wiener process that satisfies $\mathrm{var}(W_t - W_s) = |t - s|$. If $n(t) \equiv n_\text{const.}$, \cref{psa_sde} is an Ornstein-Uhlenbeck (i.e., Gaussian) process, with equilibrium solution satisfying \cite{Doob.1942}
		\begin{equation}
			\mathbb{E}(p(t)) = \dfrac{\theta}{\beta} \,n_\text{const.},\quad %
			\mathrm{var}(p(t)) = \dfrac{\eta^2}{2\beta},\quad %
			\mathrm{cor}(p(t),p(t+\delta t)) = \exp(-\beta|\delta t|).
		\end{equation}
	That is, our model with scaling factor $\beta / \theta = 1$, $\eta^2 = 2\beta\sigma^2$, and $\beta = \log(2) / \tau$. In principle, all subsequent analysis could be performed directly using \cref{psa_sde}. However, this adds additional statistical considerations, which we explore in \cref{stoch_psa}.
		
	We demonstrate this statistical model in \cref{fig1}b,c, showing a sampled PSA measurement trajectory corresponding to a GTV at $n(t) \equiv 1$, with parameters taken as the maximum-a-posteriori estimate from Patient 99 in the clinical dataset. Measured PSA varies significantly, with a reduction of approximately 20\% sustained over several months, despite a constant GTV.

	\subsubsection{Adaptive therapy}

	As with previous studies, we simulate ADT schedules that depend on measured serum PSA concentration. The goal of adaptive therapy is to maintain the tumour within a fixed size range by applying treatment until the lower set-point is reached, at which point treatment is removed and the tumour allowed to regrow to the higher set-point. In the AT50 schedule implemented clinically by Zhang et al. \cite{Zhang.2022}, these thresholds, denoted by $(\alpha_1,\alpha_2)$, are set to $(0.5, 1)$ on the scale of normalised PSA, $\hat{p}(t)$. Mathematically, we set $T_\mathrm{x}(t)$ such that
		\begin{equation}
			T_\mathrm{x}(t) = \lim_{\delta t \rightarrow 0} \left\{\begin{array}{ll}
					0 & \hat{p}(t) \le \alpha_1,\\
					1 & \hat{p}(t) \ge \alpha_2,\\
					T_\mathrm{x}(t-\delta t) & \text{otherwise}.\\
				\end{array}\right.
		\end{equation}

	In practice, neither precise nor continuous measurements of a patient's GTV are currently feasible. Therefore, we assume that adaptive therapy decisions are based upon noisy PSA measurements collected at regular discrete intervals $t_i = i \Delta$ for $i \in \mathbb{N}$ (i.e., $\Delta = \SI{30}{\day}$ for measurements made approximately monthly). Treatment is, therefore, withdrawn when a normalised PSA measurement is observed to satisfy $\hat{p}(t_i)  \le \alpha_1$, and only resumed when $\hat{p}(t_i) \geq \alpha_2$ is observed. 
	
	\subsubsection{Treatment performance statistics}
	
	\noindent\textit{Time-to-progression.} A patient is deemed to progress if their tumour exceeds a size of $n(t) = \beta n_0$ and is unresponsive to treatment (mathematically, if $n'(t) > 0$ with $T_\mathrm{x}(t) > 0$). We follow previous work \cite{Strobl.2021} and set $\beta = 1.2$, consistent with the established RECIST guidelines, which provide standardised clinical rules to quantify how tumours respond to treatment \cite{Therasse.2000}. While progression would, in practice, be determined using measured PSA since GTV is not measured, our method allows us to determine the actual time until progression using the model-predicted GTV directly.
	
	\medskip
	\noindent\textit{Time-to-metastasis.} Adaptive therapy works by maintaining the tumour at a sufficiently large size that the proliferation of resistant cells that drive progression is suppressed through competition. However, maintaining what may be a large tumour brings other risks, including tumour metastasis. Living with a sustained tumour burden may also be associated with other health risks and reduced quality of life \cite{Chung.2024}. We describe a simple stochastic model of metastasis, whereby every cell at the boundary of the solid tumour (sensitive or resistant) has an equal and instantaneous propensity to metastasise, where metastasis events are driven in individual cells by independent and homogeneous Poisson processes. At the tumour level, therefore, metastasis is driven by an inhomogeneous Poisson process with a rate given by
	\begin{equation}\label{metastasis_rate}
			\lambda(t) = \omega \left(\dfrac{n(t)}{K}\right)^{2/3},	
		\end{equation}
	where $\omega$ determines the relative timescale, and relates to the individual propensity of each cell (we expect $\omega$ to vary between patients, but assume that it is fixed, but unknown, for a given patient). Here, the power of $2/3$ arises from the assumption that the tumour is approximately spherical and that cells are approximately uniform in volume such that the tumour surface area (and hence, number of cells at the tumour periphery) is proportional to $(n(t) / K)^{2/3}$. Finally, by scaling by $K$, we can interpret $\omega$ as the rate of metastasis when the tumour is at carrying capacity, giving us the maximum rate for a patient.
	
	For a given patient, we define the time-to-metastasis (TTM) as the time until the first event in the inhomogeneous Poisson process with rate given by \cref{metastasis_rate}. The TTM can be sampled numerically by discretising \cref{metastasis_rate} as piecewise-constant across sufficiently small intervals $\delta t$. The expected TTM is given by
		\begin{equation}\label{ettm}
			\mathbb{E}(\mathrm{TTM}) = \int_0^\infty \exp\left(-\int_0^t \lambda(u) \dd u \right) \dd t.
		\end{equation}
	To obtain a numerical approximation to this integral, we couple an ODE for $I(t) := \int_0^t \lambda(u) \,\dd u$ to \cref{ode_model} and solve for a sufficiently large time horizon such that the integrand becomes sufficiently small (e.g., the probability that the TTM is larger than the time horizon is negligible). The expectation is then computed using numerical quadrature.
		
	\subsection{Statistical parameter estimation and prediction methods}

	Our primary goal is to determine a distribution of parameter sets that produce realistic patient trajectories, such that we can quantify uncertainty arising from both noisy PSA measurements and incomplete information about the underlying dynamics. A significant challenge when performing inference is that cancer is a highly heterogeneous disease, such that all parameters are likely to vary across the patient cohort. We follow the pseudo-hierarchical approach in our previous work \cite{Browning.2024a}, by constructing a \textit{pooled posterior} as a uniform mixture of the posterior distributions associated with individual patients in our dataset. In this way, we allow all model parameters to vary between patients without placing constraints on the very likely nonlinear correlations between parameters that would be required by a more standard mixed effects or hierarchical modelling approach.
	
	We take a Bayesian approach to inference, and encode initial knowledge about model parameters in a prior, denoted $\pi(\bm\theta)$, where  $\bm\theta$ represents a vector of unknown parameters. The priors for each patient  are independent between parameters, and the marginal priors placed on individual parameters (given in \cref{tab1}) are constructed from the constraints given in existing works. Notably, the parameter $\lambda_S$ is fixed to \SI{0.027}{\per\day} in existing work \cite{Zhang.2017,Strobl.2021}; we relax this by placing a truncated positive normal prior, with untruncated mean of \SI{0.027}{\per\day} and coefficient of variation 0.5. On the parameter $\gamma_D$, which represents the effect of ADT (see \cref{ode_model}), we place a uniform prior over the set of physically constrained values. 
	
	\begin{table}
		\centering
		\begin{tabular}{cp{5cm}cp{4cm}}\hline
			Parameter 	& Description & Value / Prior & Source\\\hline\hline
			$\lambda_S$ 		
				& Sensitive cell proliferation rate 
				& $\mathcal{N}(0.027, 0.0135^2)$\,\SI{}{\per\day}
				& Mean from \cite{Zhang.2017}.\\
			$\lambda_R$	
				& Resistant cell proliferation rate 
				& $c_R \lambda_S$\,\SI{}{\per\day} \\
			$c_R$ 
				& Cost of resistance 
				& $\mathcal{U}(0,1)$
				& Follows from $\lambda_R < \lambda_S$.\\
			$s_0$
				& Initial sensitive cell volume
				& $(1 - f_0) n_0$\\
			$r_0$
				& Initial resistant cell volume
				& $f_0 n_0$\\
			$n_0$
				& Fraction of niche initially occupied
				& $\mathcal{U}(0.1,1)$
				& Lower limit from \cite{Prokopiou.2015,Strobl.2021}\newline
				  (upper limit relaxed)\\
			$f_0$
				& Initial resistant cell fraction
				& $\mathcal{U}(0.001,0.5)$
				& Lower limit from \cite{Grassberger.2019,Strobl.2021}\newline 
				  (upper limit relaxed)\\
			$\gamma_D$
				& Fraction of cells killed by drug at mitosis
				& $\mathcal{U}(0,2)$
				& Physical constraint.\\
			$\gamma_T$
				& Apoptosis rate
				& $\hat{\gamma}_T r_S$\,\SI{}{\per\day} \\
			$\hat{\gamma}_T$ 
				& Relative apoptosis rate
				& $\mathcal{U}(0,1)$
				& Assumption that $\gamma_T < \lambda_S$\\\hline\hline
			$p_0$
				& Initial PSA level
				& $\mathcal{U}(0,2)$\\
			$\sigma$
				& PSA standard deviation
				& $\mathcal{U}(0.001,1)$\\
			$\tau$
				& PSA temporal autocorrelation
				& $\mathrm{Exp}(10)$\,\SI{}{\day}\\
			\hline
		\end{tabular}
		\caption{Model parameter values and priors.}
		\label{tab1}
	\end{table}
	
	Applying data from Patient $i$, we update our knowledge of the parameters to form a corresponding patient-specific posterior distribution, such that
		\begin{equation}
			\underbrace{\pi_i(\bm\theta | \mathcal{D}_i)}_\text{Posterior $i$} \propto \underbrace{\pi(\mathcal{D}_i | \bm\theta)}_\text{Likelihood} \underbrace{\pi(\bm\theta)}_\text{Prior}. 
		\end{equation}
	Here, $\mathcal{D}_i$ represents the PSA measurements and treatment schedule reported for Patient $i$. The likelihood is formed using the correlated normal model described in the previous section.  
	
	Denoting the set of data from all patients as $\{\mathcal{D}_i\}_{i\in\mathcal{I}}$, we can represent the posterior for Patient $i$ as the posterior over all patients, conditioned on the knowledge that the parameters relate to Patient $i$. That is,
	\begin{equation}
		\pi_i(\bm\theta | \mathcal{D}_i) = \pi(\bm\theta | \{\mathcal{D}_i\}_{i\in\mathcal{I}}, i).
	\end{equation}
	In this way, we can marginalise to obtain the posterior over all patients, denoted
	\begin{equation}
		\pi(\bm\theta|\{\mathcal{D}_i\}_{i\in\mathcal{I}}) = \sum_{i \in \mathcal{I}} w_i \pi_i(\bm\theta | \mathcal{D}_i). 
	\end{equation}
	In this work, we take $w_i = w$ for all $i$ to be equal, such that all patients in $\mathcal{I}$ are equally represented. 
	
	Samples from individual patient posteriors are obtained using an adaptive Metropolis with adaptive scaling Markov chain Monte Carlo (MCMC) algorithm provided by \texttt{AdaptiveMCMC} in Julia \cite{Vihola.2020}. For each patient, we take 10\textsuperscript{5} samples across four independent chains, initialised independently using samples from the prior distribution. We verify MCMC convergence using the $\hat{R}$ convergence diagnostic \cite{Vehtari.2021}: in cases for which $\hat{R} < 1.1$ is not satisfied, the chains are discarded and resampled. In some cases, lack of convergence could be attributed to a potential error in the first collected data point: in this case, the first observation is discounted. For 12 of the 85 patients, a fit could not be obtained after 10 MCMC restarts. We attribute this to model misspecification, which we address in \cref{sec:transport-mis} and the discussion; therefore, these patients were excluded from $\mathcal{I}$ (fits for these patients are provided as supplementary material). The concatenation of an equal number of posterior samples from each patient (in our case, $4 \times 10^5$ samples) gives us what is often referred to as a ``virtual cohort'' of parameter values that produce clinically realistic PSA trajectories. We can obtain samples from the posterior over all patients by resampling from this pool.

\section{Results}

	\begin{figure}[!t]
	    \centering
	    \includegraphics[width=\textwidth]{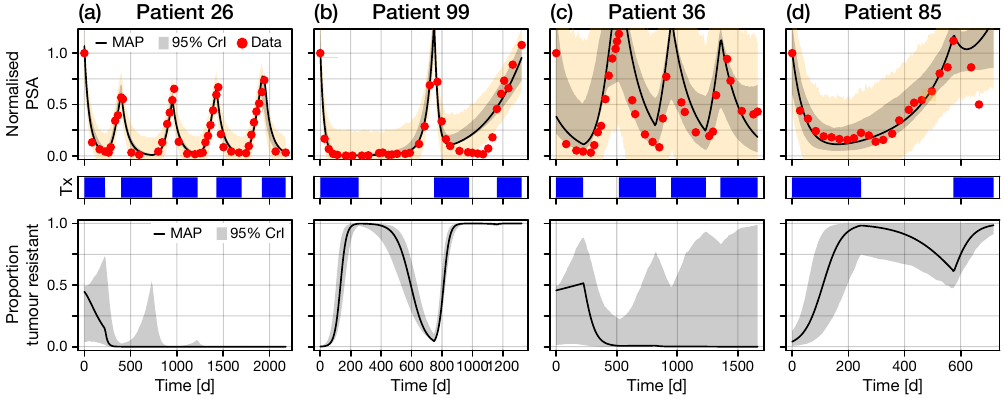}
	    \caption[Figure 2]{\textbf{Individual patient fits and posterior predictions}. To account for both patient-to-patient heterogeneity and uncertainty, we calibrate the mathematical model independently to patients in the Bruchovsky dataset \cite{Bruchovsky.2006} (all patient fits obtained are available as supplementary material). Top row shows normalised PSA measurements (red discs), the model prediction at the MAP parameter set (black), a 95\% credible interval for the normalised PSA (grey), and a 95\% prediction interval for the measured PSA (yellow). It follows from our statistical model that the expected normalised PSA concentration is proportional to the GTV. The reported treatment schedule is given in the middle row indicating periods where the patient was on ADT (blue). Bottom row shows the MAP and 95\% credible interval for the proportion of tumour that is resistant for each patient.} 
	    \label{fig2}
	\end{figure}

	In \cref{fig2} we present the data and individual patient fits for four of the 73 patients in $\mathcal{I}$ (fits for all 73 patients are given as supplementary material). The temporally correlated PSA model also allows us to capture periods in which the PSA deviates from the mean for a sustained time period. For example, the data for Patient 99 (\cref{fig2}b) shows a decrease in PSA that persists following the removal of treatment. This cannot be captured by the tumour dynamics model, which predicts with a high level of certainty that the GTV is increasing. However, the statistical model (demonstrated for Patient 99 in \cref{fig1}c) is able to capture this deviation without necessarily overestimating the measurement noise parameter. Results for Patient 36, in contrast, show greater uncertainty in model predictions, potentially associated with a larger expected deviation between PSA and GTV (e.g., larger noise or a more sustained correlation).

	\subsection{Identifiability of drug action}
    \label{sec:drug_action}

	Examining the posterior for individual parameters (or parameter combinations) allows us to investigate model hypotheses. For example, in \cref{fig3}, we study the posterior for $\gamma_D$ to gain insight into the effect of ADT: whether ADT only inhibits proliferation ($0 < \gamma_D < 1$), or whether it induces cell death upon proliferation ($0 < \gamma_D < 2$). A value $\gamma_D > 1$ is only, therefore, consistent with the death upon proliferation mechanism. These hypotheses cannot be distinguished (are non-identifiable) for $\gamma_D < 1$, although model selection could in principle be performed, with results dependent upon the chosen associated prior over models. In our case, this classification cannot be made for almost all patients (\cref{fig3}a), with 99\% credible intervals for $\gamma_D$ inclusive of the threshold $\gamma_D = 1$. The posterior over all patients (\cref{fig3}b), meanwhile, shows significant mass in the region $\gamma_D > 1$ associated only with the scenario  that the drug induces cell death on proliferation.

	\begin{figure}
	    \centering
	    \includegraphics[width=\textwidth]{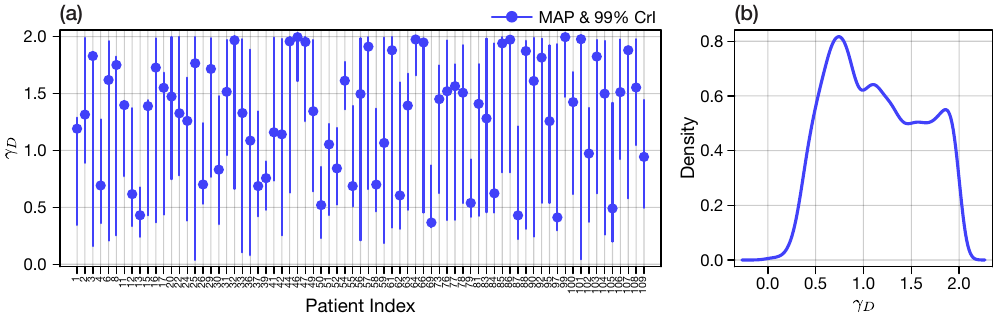}
	    \caption[Figure 3]{\textbf{Posterior summaries for the drug effect, $\gamma_D$}. (a) 99\% credible intervals and MAP produced from the patient-specific posteriors for $\gamma_D$. (b) Joint posterior across all patients for whom a fit was obtained.} 
	    \label{fig3}
	\end{figure}

	\subsection{Predicting the probability of eventual relapse}
    \label{sec:relapse}

	While the mathematical model is calibrated to measurements of PSA, it is the underlying tumour dynamics that are of primary interest. In particular, it is the potential transition from a tumour comprised largely of sensitive cells to one comprised predominantly of resistant cells that leads to eventual relapse. Our Bayesian approach to parameter inference provides information about the quality of the model fit, and also the uncertainty in key statistics such as the proportion of the tumour comprising resistant cells through time (as shown in \cref{fig2}). This uncertainty is particularly important as the parameter values determine the stable equilibria of \cref{ode_model}. Within this model, relapse will only occur if $(s(t),r(t)) = (0,1 - \lambda_R / \gamma_T)$ is the stable equilibrium when $T_\mathrm{x}(t) = 1$; adaptive therapy can potentially \textit{extend} the time-to-progression, but the parameter values determine whether relapse will occur. Full details of the stability analysis are given as supplementary material.
	
	Across all patients in $\mathcal{I}$, the posterior probability of relapse is approximately 43\%, although it is important to note that this is not representative of the general population as our sample is biased towards males enrolled into the trial for whom fits could be obtained. Additionally, this statistic is specific to \cref{ode_model} representing the true underlying dynamics (we expand on this in \cref{sec:transport}). 
	
	The posterior relapse probability varies markedly across the patients shown in \cref{fig2}. Patient 26, for whom the resistant tumour volume tends to zero, has a posterior probability of relapse of approximately 0.7\%. This contrasts with Patients 99 and 85, for whom approximately 100\% of posterior samples correspond to eventual relapse. Most interesting for Patient 99 is the high certainty with which the proportion of the tumour that is resistant varies across the treatment cycle, very quickly tending to a tumour comprised entirely of resistant cells in the first treatment window, before the faster-growing sensitive population dominates towards the end of the period in which the drug was withdrawn. By the second treatment cycle, however, the sensitive cell population appears to be driven extinct, leading to progression. The results for Patient 36 exhibit more uncertainty, with a posterior relapse probability of 35\%. Such an intermediate classification cannot be captured without parameter uncertainty: the best-fit (MAP) model predicts expansion of the resistant population but no relapse, an outcome that is clearly not guaranteed. Crucially, relapse depends on the resistant growth rate $\lambda_R$ (through the equilibrium $1-\lambda_R/\gamma_T$) and, as we show in \cref{sec:transport}, it is this  parameter that adaptive schedules constrain only weakly. This both underlies the uncertainty in the relapse probabilities reported here and limits prediction across schedules.
	
	\subsection{Adaptive therapy robustly improves time-to-progression}
    \label{results_ttp}

	Equipped with a calibrated mathematical model, we turn to prediction in order to compare adaptive therapy with continuous treatment. For any given patient, our approach captures two primary sources of uncertainty: first, as explored in the previous section, parameter uncertainty due to finite, noisy measurements. Second, uncertainty from decisions based on PSA: a noisy and indirect measurement of GTV. Our statistical approach, in particular, captures the possibility that the PSA deviates significantly from GTV for a sustained period of time. Our goal is to determine whether, within our mathematical model, but across all potential parameter sets, adaptive therapy presents an improved TTP.
	
	\subsubsection{Within a single patient}
	
	We focus first on Patient 99, for whom almost all parameter sets yield relapse (e.g., a finite TTP). We consider the so-called ``AT50'' schedule, wherein treatment is applied until \textit{measured} PSA (that is, PSA measured inclusive of the autocorrelated noise model) falls below 50\% of the initial value. Treatment is then removed until the measured PSA exceeds the initial reading (e.g., normalised PSA, $\hat{p}(t) \ge 1$). Additionally, we apply a clinically realistic decision interval of $\Delta = \SI{30}{\day}$. 
	
	\begin{figure}[!t]
	    \centering
	    \includegraphics[width=\textwidth]{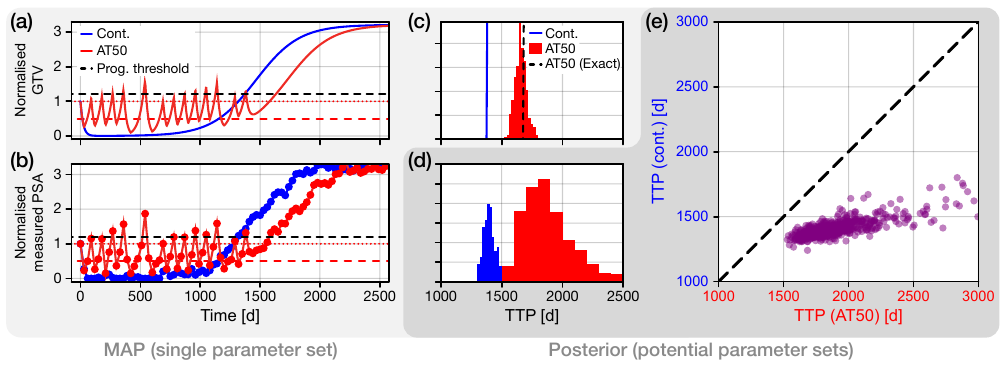}
	    \caption[Figure 4]{\textbf{Simulated adaptive treatment schedules.} We compare continuous treatment and a so-called ``AT50'' adaptive schedule with a 30-day decision interval.  In (a,b,c), we show results for a single parameter set, the MAP for Patient 99; and in (d,e), across the posterior for Patient 99. (a) GTV, normalised such that the initial volume is unity. (b) Normalised measured PSA, showing continuously sampled trajectories and (discs) 30-day measurements at which time treatment decisions are made. In (a,b), the adaptive therapy thresholds (red dashed and dotted for 50\% and 100\%, respectively) are violated due to a finite decision interval and an imperfect correlation between GTV and PSA. (c,d) Time-to-progression (TTP) for both treatment strategies using (c) the MAP and (d) samples from the posterior distribution. In (c), the TTP on continuous treatment is deterministic as stochasticity enters only through noise in PSA measurements. Also shown is the TTP under the assumption that PSA corresponds precisely to GTV (vertical black dashed). (e) Parameters are resampled from the posterior and the TTP under both continuous and adaptive therapy sampled. Only samples with a finite TTP are shown.}
	    \label{fig4}
	\end{figure}
	
	In \cref{fig4}a,b, we produce a single realisation of both continuous therapy and an AT50 schedule using the MAP for Patient 99, showing the normalised GTV, $n(t) / n(0)$, in \cref{fig4}a; and, the normalised measured PSA, in \cref{fig4}b. In contrast to similar studies, which assume precise, continuously-collected measurements, our results yield normalised GTVs that violate the AT50 goal of keeping the tumour within 50--100\% of its initial volume. This is because, here, decisions are based on both noisy measurements (e.g., measurement noise may indicate that the GTV has shrunk further than it actually has), and measurements that are not made continuously (e.g., it may take up to 30 days for a tumour that has breached the threshold to be detected). In \cref{fig4}c, we simulate 500 such trajectories and show the distribution of TTP. We remind the reader that there is no parameter uncertainty here: variability is solely due to a finite decision interval and noisy measurements of PSA, thus we expect no variability under continuous therapy. Clearly, the AT50 schedule results in a higher TTP, even when measurement noise is considered. Furthermore, the median TTP appears similar to that from a schedule where exact measurements of GTV are available: including measurement noise in the decision process appears as likely to raise the TTP as to lower it. 
	
	Next, we repeat the analysis across the posterior for Patient 99, incorporating all sources of variability captured by our framework. In \cref{fig4}d, we show the marginal distribution of TTP, now with variability in TTP under continuous therapy due to parameter uncertainty. In all cases, as we would expect, variability is much higher when the parameters are fixed (in \cref{fig4}c). Finally, in \cref{fig4}e we plot samples from the full distribution: each point corresponding to a fixed parameter set sampled from the posterior for Patient 99, along with the continuous therapy TTP and that sampled from a single AT50 schedule. For this patient, adaptive therapy always outperforms continuous therapy.
	
	\subsubsection{Across all patients}
	
	In Patient 99, for whom our model will almost always predict eventual relapse, we see that adaptive therapy provides an improved TTP relative to continuous therapy. However, it is only for 43\% of parameter values in our posterior over all patients for whom we expect to see relapse, based on a steady-state analysis (supplementary material). Here, we investigate the TTP improvement provided by adaptive therapy by sampling across all patients.
	
	For the AT50 30-day schedule studied in \cref{fig4}, we show the posterior-over-all patients analogue in \cref{fig5}a. For approximately 95\% of samples, we again see that AT50 outperforms continuous therapy in terms of TTP. In cases where continuous therapy outperforms AT50,  the improvement is, for the most part, small. However, only approximately 12\% of samples result in progression within the \SI{e5}{\day} horizon over which we run our simulations. That is, the majority of patients (including those we expect to eventually relapse) are not observed to progress, even under continuous therapy. 
	
	In \cref{fig5}b we demonstrate that this finding is consistent across a range of adaptive schedules, for which the threshold and decision interval vary. In all cases, the adaptive schedule is an improvement on continuous treatment. Further, the highest average improvement in \cref{fig5}b appears to be associated with the AT50 schedule: this is expected, as adaptive therapy exploits the competition dynamics between sensitive and resistant cells, with growth suppression due to this competition being strongest in our model for large GTVs (and hence, larger adaptive therapy minimum thresholds $\alpha_1$).

	\begin{figure}[!t]
	    \centering
	    \includegraphics[width=\textwidth]{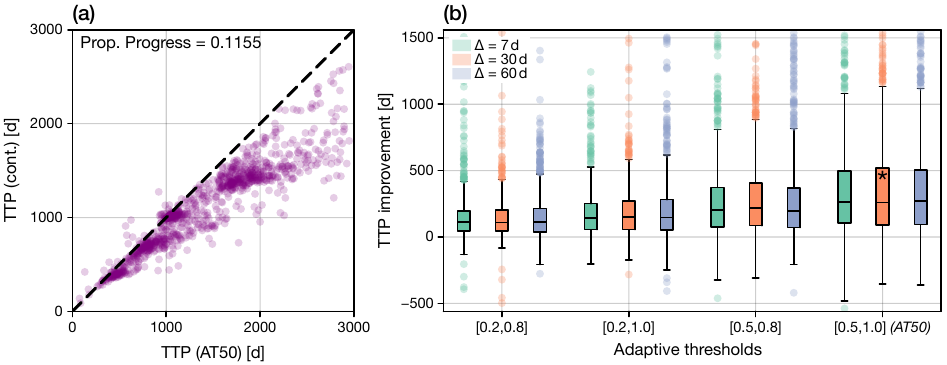}
	    \caption[Figure 5]{\textbf{Time-to-progression improvement across all patients.} (a) We repeat the analysis in \cref{fig4}e for the joint posterior distribution across \textit{all} patients for whom fits were obtained. A finite TTP, indicating that a patient eventually progresses is observed in only 11.55\% of samples. For almost all samples, the TTP is higher (clinically, an improvement) under adaptive therapy. (b) We examine the TTP improvement ($\mathrm{TTP}_\text{adapt.} - \mathrm{TTP}_\text{cont.}$) for a suite of thresholds and decision intervals. An asterisk (*) indicates the box plot corresponding to the 30-day decision AT50 scenario shown in (a) and in \cref{fig4}. }
	    \label{fig5}
	\end{figure}

	\subsection{Adaptive therapy potentially associates with higher risk factors}
    \label{sec:riskfactors}

	Fundamentally, adaptive therapy represents a trade-off between the potential elimination of sensitive cells and the maintenance of a larger (but stable) GTV within which the growth of resistant cells is suppressed by competition (see e.g., \cref{fig4}a). While this approach can lead to more favourable outcomes as measured by TTP, there are various risk factors associated with large tumours. Here, we focus on metastasis, with the TTM modelled by the inhomogeneous GTV-dependent Poisson process given by \cref{metastasis_rate}. We highlight, however, that these results may provide insight into other risk factors correlated with GTV that are not captured by the model; for example, evolution to more aggressive genotypes. 
	
	In \cref{fig6}a, we sample the TTM for Patient 99, for whom we have shown that---within our modelling framework---an AT50 schedule almost certainly yields a TTP improvement. It is only with a posterior probability of approximately 27\% that the AT50 schedule is predicted to yield an improvement in TTM. Results in \cref{fig6}b, meanwhile, show the expected TTM (\cref{ettm}) across the patient-specific posterior for Patient 99; in all cases, the expected TTM is lower under continuous therapy. Next, we repeat the analysis to sample TTM across the posterior-over-all patients (\cref{fig6}c), distinguishing patients that eventually progress under continuous therapy from those that do not. For both groups, continuous therapy once again outperforms the AT50 schedule.
	
	We emphasise that the results in \cref{fig6}a are  illustrative, and depend on the patient-specific instantaneous risk parameter $\omega$, which determines the propensity with which individual cells metastasise. Here, we have set $\omega = \SI{e-2}{\per\day}$ for all patients, although this parameter is likely to vary between patients. For patients with relatively large $\omega$, we expect metastasis to occur relatively early: thus, the immediate reduction in GTV sustained by continuous therapy presents an advantage. For patients with a relatively small $\omega$, the relative risk is lower at all GTVs: adaptive therapy may present an advantage by delaying the TTP and, thus, time until the risk increases due to tumour growth exceeding its initial size. In practice, we expect $\omega$ to be difficult to parameterise a priori, without patient-specific \textit{in vitro} information provided by, for example, organoid studies \cite{Drost.2018}.

	\begin{figure}
	    \centering
	    \includegraphics{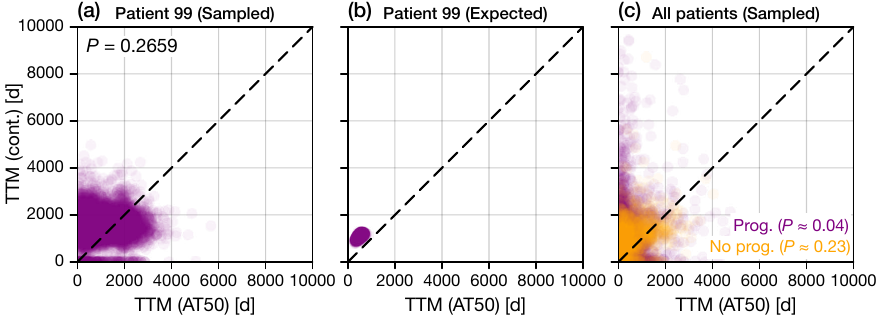}
	    \caption[Figure 6]{\textbf{Time-to-metastasis.} We compare the time-to-metastasis ($TTM$) for adaptive and continuous therapy when metastasis occurs according to an inhomogeneous Poisson process with instantaneous rate $\lambda(t) = \omega (n^{2/3}(t) / K^{2/3})$, where $n(t)$ is the normalised GTV. Here, we fix $\omega = \SI{e-2}{\per\day}$, although it is likely to vary between patients. Results in (a,c) show the sampled time-to-metastasis across resampled posterior samples associated with (a) Patient 99, and (c) the joint posterior for all patients. In all cases, $P$ gives the proportion of samples for which adaptive therapy outperforms continuous therapy, indicated by a longer time-to-metastasis. In (b), we compute the expected time-to-metastasis across posterior samples for Patient 99.}
	    \label{fig6}
	\end{figure}

	\subsection{Incorporating dynamic variability}
    \label{sec:dynamic_var}
	
	Implicit in almost all deterministic models of tumour progression is that the underlying dynamics and parameter values are fixed. Over the treatment timescales considered in this work (the average duration in our dataset is more than four years), it is reasonable to expect significant variability in key model parameters: due both to short-term fluctuations associated with variability in patient health, and to long-term trends driven by tumour evolution.

	\subsubsection{Temporal variation in tumour dynamics}

	First, we consider a scenario where the tumour carrying capacity is dynamic such that $K \mapsto K(t)$. We are motivated by observations of some patients whose data show apparent differences in dynamics between treatment cycles: including, in some cases, reductions in GTV while off treatment (e.g., Patient 2 for $t > \SI{1600}{\day}$ in \cref{fig7}). Such changes in carrying capacity could be driven by one of the many hallmarks of cancer, including tumour remodelling, vascularisation, changes in the immune response, among many others \cite{Hanahan.2011}.
	
	To capture both short-term fluctuations and long-term trends, we consider that $K(t)$ is driven by an Ornstein-Uhlenbeck process with long-term average given by $K_\infty$, to be inferred. This yields
	\begin{equation}
		\dd K(t) = -\xi_K(K - K_\infty) \dd t + \eta_K \dd W_t.
	\end{equation}
	Here, without loss of generality (since the data are rescaled), we set $K(0) = 1$ such that $K_\infty > 1$ corresponds to a larger tumour niche in the long-term, and conversely when $K_\infty < 1$. The parameters $\xi_K$, corresponding to the rate at which $K(t)$ tends to $K_\infty$, and $\eta_K$, which characterises the magnitude of short-term fluctuations, are inferred alongside $K_\infty$.
	
	We perform inference on the new system of stochastic differential equations using a pseudo-marginal particle filter method adapted from \cite{Warne.2020}. Given the substantial computational cost of parameter inference, we restrict our analysis to a single motivating patient, Patient 2, with results (for a single posterior sample) shown in \cref{fig7}. Unique to this model is that it is the underlying tumour dynamics that are stochastic: for a fixed parameter set and treatment regime, we see variability in the expected tumour size (\cref{fig7}). 

	\begin{figure}[!t]
	    \centering
	    \includegraphics{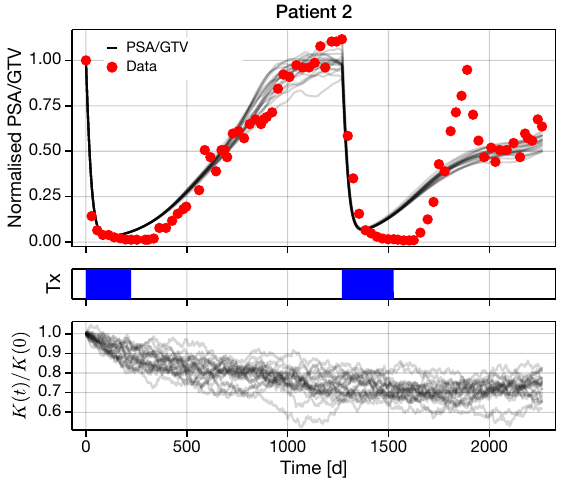}
	    \caption[Figure 7]{\textbf{Stochastic carrying capacity model.} We calibrate the stochastic carrying capacity model to data from Patient 2 using a pseudo-marginal particle filter \cite{Warne.2020}. 
        (Top) We compare clinical data to several realisations of the model for a fixed parameter set sampled from the patient-specific posterior distribution. Under our statistical model, the expected PSA concentration is proportional to GTV, thus both expected PSA and GTV are identical when normalised. The treatment schedule is given in the middle row indicating periods when the ADT was applied (blue). 
        (Bottom) For the fixed parameter set we sample trajectories for $K(t)$, showing how the tumour dynamics vary across the patient's course of treatment. Here, $K_\infty \approx 0.713$.}
	    \label{fig7}
	\end{figure}

	\subsubsection{Temporal variation in PSA dynamics}\label{stoch_psa}
	
	Secondly, we consider a scenario in which a patient's PSA dynamics change with time. This could be, for example, due to fluctuations in the expression of genes linked to PSA production \cite{Thorek.2013}. Specifically, we now model PSA secretion and degradation explicitly, and assume that the secretion rate, denoted by $\alpha(t)$, is dynamic. This yields a set of stochastic differential equations coupled to the deterministic tumour dynamics model considered previously in \cref{ode_model}, with
	\begin{subequations}\label{stoch_psa_model}
	\begin{align}
		\dv{p(t)}{t} &= \alpha(t) n(t) - \beta p(t)\label{stoch_psa_model_a},\\
		\dd \alpha(t) &= -\xi_\alpha (\alpha - \hat{\alpha}) \dd t + \eta_\alpha \dd W_t\label{stoch_psa_model_b}.
	\end{align}
	\end{subequations}
	Here, $\beta$ represents the constant PSA degradation rate, $\hat{\alpha}$ the mean PSA secretion rate, and $\xi_\alpha$ and $\eta_\alpha$ parameters that determine the variance and autocorrelation of $\alpha(t)$. Initially, we assume that the system satisfies $p(0) = p_0$, as before, and that $\alpha(t)$ is stationary, such that $\alpha(t) \sim \mathcal{N}(\hat{\alpha}, \eta_\alpha^2 / (2\xi_\alpha))$. One weakness of this model is that $\alpha(t)$ is not guaranteed to be positive, potentially leading to PSA trajectories that include negative values.
	
	As a linear system, however, \cref{stoch_psa_model} yields a semi-analytical solution in terms of integrals relating to $n(t)$, itself described by the original ODE model (\cref{ode_model}) which is not coupled to the equation of $p(t)$. The mean PSA is given simply by
		\begin{equation}\label{stoch_psa_model_mean}
			\dv{\langle p(t) \rangle}{t} = \hat\alpha n(t) - \beta p(t).
		\end{equation}
	To obtain an expression for the covariance, we multiply \cref{stoch_psa_model_a} by an integrating factor to obtain
		\begin{equation}\label{stoch_psa_model_sol}
			p(t) = \mathrm{e}^{-\beta t} \left(p(0) + \int_0^t \mathrm{e}^{\beta u} \alpha(u)n(u) \dd u\right).
		\end{equation}
	If  $\alpha(t)$ is a Gaussian process, then $p(t)$ is also Gaussian. Therefore, the dynamics of $p(t)$ are entirely described by the mean (given above) and the autocovariance. From the solution to \cref{stoch_psa_model_b}, it follows that 
		\begin{equation*}
			\mathrm{cov}(p(t_1),p(t_2)) = \dfrac{\eta_\alpha}{2\xi_\alpha}\int_0^{t_1}\int_0^{t_2} n(u_1)n(u_2) \mathrm{e}^{-\xi_\alpha|u_1 - u_2|} \dd u_2 \dd u_1.
		\end{equation*}

	In practice, we do not work directly with this integral formulation, but rather transform it into a system of ODEs that can be coupled to both the equation for the mean (\cref{stoch_psa_model_mean}), and the equations that characterise $n(t)$ (\cref{ode_model}). Specifically, it can be shown that
		\begin{equation*}\label{stoch_psa_model_cov}
			\mathrm{cov}(p(t_1),p(t_2)) = \dfrac{\eta_\alpha}{2\xi_\alpha}\Big(A_+(t_1)A_-(t_2) + B_+(t_1) - B_-(t_1)\Big),
		\end{equation*}
	where
		\begin{equation}
			\dv{A_\pm(t)}{t} = n(t) \mathrm{e}^{\pm \xi_\alpha t}\qquad\text{and}\qquad\dv{B_\pm}{t} = A_\pm(t) n(t) \mathrm{e}^{\mp \xi_\alpha t},
		\end{equation}
	subject to $A_\pm(0) = B_\pm(0) = 0$. A full derivation is provided in the supplementary material.

	\begin{figure}[!t]
	    \centering
	    \includegraphics{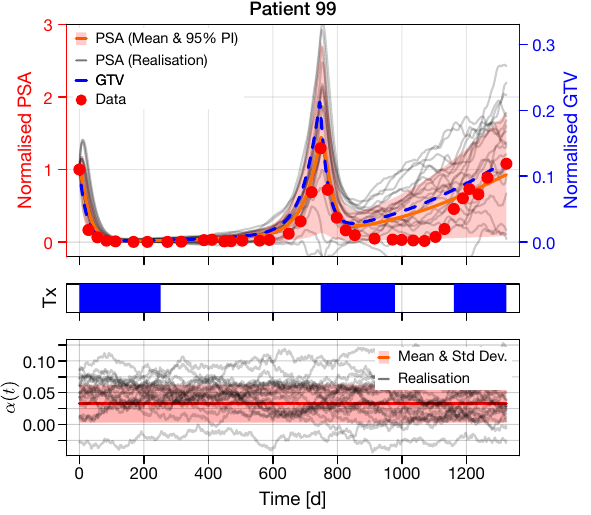}
	    \caption[Figure 8]{\textbf{Stochastic PSA secretion model.} We calibrate the stochastic PSA secretion model to data from Patient 99 by solving the larger coupled system of ODEs that give the Gaussian process solution to the PSA concentration. Here, we show results using a single parameter set given by the MAP. (Top) We compare clinical data to several PSA realisations and the mean PSA (left axis) and the GTV (right axis). Both the PSA and GTV are normalised to appear as unity on the normalised PSA axis. (Bottom) We show several realisations for the secretion rate, along with the mean and standard deviation. }
	    \label{fig8}
	\end{figure}
	
	Solving the resultant coupled ODE system gives sufficient information to parameterise the Gaussian process that determines the distribution solution to $p(t)$, and from which we can formulate a likelihood function. Compared to the stochastic model considered in the previous section, this allows us to perform Bayesian inference with a similar computational cost to the approach applied in the rest of this paper. In \cref{fig8}, we show such results using the MAP parameter set for Patient 99, considered earlier. The most obvious benefit of the stochastic PSA secretion model is that the variability in PSA measurements scales much more naturally: during periods of rapid change (e.g., due to a rapidly growing tumour), there is significant variation in PSA, even though the GTV is deterministic. Additionally, comparing the mean PSA trajectory to that of the GTV shows a delay, whereby growth in tumour size is not immediately reflected in PSA measurements. Finally, examining individual trajectories shows  more clearly how sustained PSA deviations from the GTV can arise.

	\subsection{Transporting inferred model parameters across treatment schedules} \label{sec:transport}

	In this work, we have constructed a cohort of what are often referred to as ``virtual patients'' by calibrating a mathematical model to data from patients that have undergone a fixed treatment schedule. The virtual cohort is then used for prediction, in much the same way as what is often referred to as an ``\textit{in silico} clinical trial'' or a ``digital twin''. Typically, these methods assume that the model structure and inferred parameter values under the observed treatment \textit{transport} (that is, are applicable) to simulated alternative treatments. That is, we have assumed that both the model and its parameters are treatment-schedule-invariant. In this section, we isolate two mechanisms by which parameter transport may fail between no-drug, continuous, intermittent, and adaptive schedules: due to non-identifiability, or due to model misspecification.
	
	\newcommand{\schobs}{{\mathcal{S}_\mathrm{obs}}}
	\newcommand{\schnew}{{\mathcal{S}_\mathrm{new}}}
	
	\subsubsection{Cross-schedule prediction}

	For any treatment schedule, denoted by $\mathcal{S}$ and defined by its input $T_\mathrm{x}(\cdot)$, let $G_\mathcal{S}(\bm\theta;t)$ denote the modelled GTV $n(t)$ obtained from \cref{ode_model}. Suppose that there exists a hypothetical true parameter set, denoted by $\bm\theta^\ast$. Given data from source schedule $\schobs$, the posterior $\pi(\bm\theta\mid y_\schobs)$ has MAP estimate $\hat{\bm{\theta}}_\schobs$. We introduce the \textit{transport error}, denoted by $\Delta_{\schobs\to \schnew}$, as the error incurred when predicting the GTV under target schedule $\schnew$ using the source-inferred parameters $\hat{\bm{\theta}}_{\schobs}$ in place of the true $\bm{\theta}^\ast$,
    \begin{equation}\label{eq:transporterror}
	   \Delta_{\schobs\to \schnew}(t) := 
       G_\schnew(\hat{\bm{\theta}}_{\schobs};t) - n_\schnew^\ast(t),
\end{equation}
	where $n_\schnew^\ast(t)$ is the hypothetical true, noise-free GTV under schedule $\schnew$, which coincides with $G_\schnew(\bm{\theta}^\ast;t)$ only when the model is correctly specified. 
    Whenever the model is correctly specified and the fit recovers the truth, $\hat{\bm{\theta}}_{\schobs}=\bm\theta^\ast$, we have $\Delta_{{\schobs}\to \schnew}\equiv 0$. 
    Therefore, a non-zero transport error means that the best-fit parameters for $\schobs$ do not transport to $\schnew$, for at least one of two reasons: either the source data do not constrain the parameters on which the target depends (non-identifiability), so the best-fit estimate cannot be trusted for unobserved schedules even though the full posterior may still support the truth; or no single $\bm\theta$ fits all schedules (misspecification), so the error is systematic and remains even for the full posterior.

	\subsubsection{Non-identifiability}
	\label{sec:transport-ide}
    In clinically relevant situations, parameters such as the initial resistant fraction $f_0$ are not directly measurable. We therefore study parameter identifiability theoretically, across  relevant treatment schedules, rather than drawing conclusions from our specific patient cohort.
    Let $J_\schobs=\partial G_\schobs/\partial\bm\theta\big|_{\bm\theta^\ast}$ be the sensitivity matrix of the GTV under the source schedule, with rows indexed by the measurement times and columns by the parameters.
    Its nullspace $\mathcal{N}_\schobs = \ker J_\schobs$ is the set of locally structurally non-identifiable directions \cite{Rothenberg1971}.
    These are the parameter perturbations $\delta\bm\theta\in\mathcal{N}_\schobs$ along which $G_\schobs$ is unchanged to first order, i.e., $J_\schobs\,\delta\bm\theta = 0$.
    Now, let $g_\schnew(\bm\theta)$ be any scalar target quantity under schedule $\schnew$ (e.g., time-to-progression, or the GTV at a specified time).

	\begin{lemma}[Transportability condition]
	\label{lem:transport}
	To leading order, noise-free data from the source schedule $\schobs$ constrain $\bm\theta$ only up to $\mathcal{N}_\schobs$. 
    The target $g_\schnew$ is uniquely determined by the data $\schobs$ 
    iff $\nabla g_\schnew(\bm\theta^\ast)\perp \mathcal{N}_\schobs$; 
    otherwise predicting $g_\schnew$ incurs an error $\nabla g_\schnew(\bm\theta^\ast)^\intercal\delta\bm\theta$ for some displacement $\delta\bm\theta\in \mathcal{N}_\schobs$ that the source data do not determine. 
	\end{lemma}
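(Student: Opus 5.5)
The argument is a first-order (linearisation) one, carried out around $\bm\theta^\ast$ with the sensitivity matrix $J_\schobs$ defined just before the statement.

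\emph{Step 1: what the source data determine.} For a perturbation $\bm\theta=\bm\theta^\ast+\delta\bm\theta$, Taylor expansion of the noise-free source observations gives $G_\schobs(\bm\theta;t_i)=G_\schobs(\bm\theta^\ast;t_i)+(J_\schobs\delta\bm\theta)_i+O(\|\delta\bm\theta\|^2)$. Noise-free data therefore fix $J_\schobs\delta\bm\theta=0$ to leading order. So the set of parameters consistent with the data is, locally, the affine space $\bm\theta^\ast+\mathcal{N}_\schobs$. Equivalently, the data determine the component of $\delta\bm\theta$ in $\mathcal{N}_\schobs^\perp=\operatorname{range}(J_\schobs^\intercal)$, namely $0$, and leave the nullspace component free. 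This proves the first sentence of the lemma.

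\emph{Step 2: expand the target.} Write $g_\schnew(\bm\theta^\ast+\delta\bm\theta)=g_\schnew(\bm\theta^\ast)+\nabla g_\schnew(\bm\theta^\ast)^\intercal\delta\bm\theta+O(\|\delta\bm\theta\|^2)$ with $\delta\bm\theta\in\mathcal{N}_\schobs$. The prediction error for any data-consistent estimate (for instance the MAP $\hat{\bm\theta}_\schobs$, which at leading order lies in this affine space) is $\nabla g_\schnew(\bm\theta^\ast)^\intercal\delta\bm\theta$. This gives the ``otherwise'' clause directly.

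\emph{Step 3: the equivalence.} If $\nabla g_\schnew(\bm\theta^\ast)\perp\mathcal{N}_\schobs$, the linear term vanishes for every admissible $\delta\bm\theta$, so $g_\schnew$ is determined to leading order. Conversely, if $\nabla g_\schnew(\bm\theta^\ast)$ has a nonzero projection $v$ onto $\mathcal{N}_\schobs$, then taking $\delta\bm\theta=\epsilon v$ gives error $\epsilon\|v\|^2\neq0$ at first order. Since $\epsilon$ is arbitrary, $g_\schnew$ is not uniquely determined. Equivalently, the condition says $\nabla g_\schnew(\bm\theta^\ast)\in\operatorname{range}(J_\schobs^\intercal)$, i.e.\ the target's sensitivity is a linear combination of the observed sensitivities.

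\emph{Main obstacle.} The subtle point is making ``to leading order'' and ``uniquely determined'' precise. Uniqueness here has to mean local first-order uniqueness: the linearisation cannot rule out higher-order resolution of nullspace directions (practical identifiability through curvature), nor non-local alternative fits. I would state explicitly that everything is understood modulo $O(\|\delta\bm\theta\|^2)$. I would also assume $G_\schobs$ and $g_\schnew$ are $C^1$ near $\bm\theta^\ast$, noting that $T_\mathrm{x}$ is piecewise constant and the ODE solution depends smoothly on parameters for a fixed schedule. For targets like TTP, which are defined by a threshold crossing, differentiability would need a transversality condition, justified by the implicit function theorem.
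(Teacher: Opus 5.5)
Your proposal is correct and follows the paper's proof essentially step for step. The paper first uses a first-order Taylor expansion of $G_{\mathcal{S}_\mathrm{obs}}$ to show that noise-free data pin $\bm\theta$ down only to $\bm\theta^\ast+\mathcal{N}_{\mathcal{S}_\mathrm{obs}}$, then expands $g_{\mathcal{S}_\mathrm{new}}$ to first order over that set, exactly as you do. Your explicit witness $\delta\bm\theta=\epsilon v$ for the converse, the reformulation $\nabla g_{\mathcal{S}_\mathrm{new}}(\bm\theta^\ast)\in\operatorname{range}(J_{\mathcal{S}_\mathrm{obs}}^\intercal)$, and your regularity caveats (smoothness for a fixed input, transversality for threshold-defined targets such as TTP) are sound refinements that the paper leaves implicit.
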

	\noindent This is the classical local-identifiability condition \cite{Rothenberg1971} applied to a scalar target; 
    see supplementary material for details and proof.

	In practice, we may have that model parameters are structurally, but not always practically, identifiable \cite{Raue2009}. 
    We therefore relax our argument from the exact nullspace $\mathcal{N}_\schobs$ to the \textit{sloppy} directions \cite{Gutenkunst2007}: the eigenvectors of $F_\schobs := J_\schobs^\intercal J_\schobs$ with relatively small (but non-zero) eigenvalues.
	%
	%
	The Fisher information $F_\mathcal{S}$ quantifies how much the data from a schedule $\mathcal{S}$ constrain the parameters; along a direction $\mathbf{v}$, the scalar $\mathbf{v}^\intercal F_\mathcal{S}\,\mathbf{v}$ is large when the data tightly determine the parameter combination $\mathbf{v}$ and small when they leave it poorly constrained.
    %
    
    We define the \textit{transportability index} $\rho_{\schobs\to \schnew}$ as the ratio of the target-$\schnew$ to source-$\schobs$ Fisher information along $\mathbf{v}$,
    \begin{equation}\label{eq:rho}
    \rho_{\schobs\to \schnew}(\mathbf{v})=\frac{\mathbf{v}^\intercal F_\schnew\,\mathbf{v}}{\mathbf{v}^\intercal F_\schobs\,\mathbf{v}}.
    \end{equation}
	Then $ \rho_{\schobs\to \schnew}(\mathbf{v})\gg 1$ along directions $\mathbf{v}$ that are poorly constrained by the source $\schobs$ and on which the target $\schnew$ depends. There, the target leans on information not provided by the source, 
    so transport along $\mathbf{v}$ is unreliable. As such, point estimates drawn from a source schedule fail to transport to any target whose prediction depends on a direction it leaves unconstrained, whether exactly (\cref{lem:transport}) or in practice (large but finite $\rho_{\schobs\to \schnew}(\mathbf{v})$). 

    The directions that source data leaves unconstrained depend on the schedule for which it was obtained.
    A time-invariant input (continuous or no-drug) probes only one drug-state and so leaves whole directions unconstrained.
    A time-variant input, whether intermittent (fixed on/off drug cycles) or adaptive (dosing modulated by estimated tumour burden), excites both drug-states and thus constrains the model parameters more. 
    In a full Bayesian treatment, this manifests as posterior uncertainty: 
    with an appropriate choice of prior the posterior spreads along the unconstrained directions yet still covers the truth. 
    Under non-identifiability, a point estimate such as the MAP $\hat{\bm{\theta}}_\schobs$ is therefore unreliable 
    (an arbitrary choice from a wide but honest posterior) rather than systematically wrong. 
    Misspecification, by contrast, biases even the full posterior (\cref{sec:transport-mis}).
    
	\medskip
	\noindent\textit{Time-invariant source schedules.} 
    Under continuous therapy ($T_\mathrm{x}\equiv1$), 
    $\lambda_S$ and $\gamma_D$ enter \cref{ode_model} only through the product $\lambda_S(1-\gamma_D)$, leaving the off-drug rate $\lambda_S$ and $\gamma_D$ unidentified. 
    Under no treatment ($T_\mathrm{x}\equiv0$), $\gamma_D$ is absent and unidentified. 
    By \cref{lem:transport}, neither time-invariant source schedule transports to targets that depend on the parameters it cannot identify (\cref{tab:matrix}). 
    In what follows, $O(\cdot)$ describes how big a quantity is relative to the quantity in its argument, in the limit $f_0\to0$ (see the supplementary material for details).

	\medskip
    \noindent\textit{Time-variant source schedules.}
    A time-variant schedule visits both drug-on and drug-off states, so it identifies $\lambda_S$ and $\gamma_D$ separately, unlike a time-invariant one. 
    Whether it also identifies the resistant growth rate $\lambda_R$ depends on how large the resistant population becomes, which is not known in advance (see e.g., \cref{fig2}, which shows that the inferred resistant population in Patient 26 remains small). 
    An intermittent schedule with sufficiently long holidays releases resistance (the resistant population reaches $O(1)$), so $\lambda_R$ is potentially well constrained and the source transports broadly (\cref{tab:matrix}).
    Adaptive therapy, by contrast, is designed to keep resistance suppressed, ideally so that $r$ stays of order $f_0$ throughout ($r=O(f_0)$), leaving the tumour burden $n=s+r$ sensitive-dominated. 
    he data then identify the initial fraction $f_0$ (through $s_0=(1-f_0)n_0$) but only weakly constrain $\lambda_R$, whose influence on the burden is negligible (\cref{prop:weak_lambdaR}).

	\begin{proposition}[Weak identifiability of resistant growth under adaptive therapy]
    \label{prop:weak_lambdaR}
    Let $F_{\mathrm{AT}}$ and $F_{\mathrm{cont}}$ denote the Fisher information $F_\mathcal{S}$ under adaptive and continuous therapy, respectively, and $e_{\lambda_R}$ the unit vector in the $\lambda_R$ direction.
	If the resistant fraction stays small, $O(f_0)$, throughout an adaptive therapy observation window, then the $\lambda_R$ diagonal entry $(F_{\mathrm{AT}})_{\lambda_R\lambda_R}=e_{\lambda_R}^\intercal F_{\mathrm{AT}}\,e_{\lambda_R}=O(f_0^2)$ whereas $(F_{\mathrm{cont}})_{\lambda_R\lambda_R}=O(1)$,
	so the transportability index \cref{eq:rho} satisfies 
    $\rho_{\mathrm{AT}\to\mathrm{cont}}(e_{\lambda_R})=O(f_0^{-2})\to\infty$ as $f_0\to0$.
	\end{proposition}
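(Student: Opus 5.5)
Since $F_\mathcal{S}=J_\mathcal{S}^\intercal J_\mathcal{S}$, the $\lambda_R$ diagonal entry is
\[
(F_\mathcal{S})_{\lambda_R\lambda_R}=\sum_i \left(\partial G_\mathcal{S}(\bm\theta^\ast;t_i)/\partial\lambda_R\right)^2 .
\]
Everything therefore reduces to bounding the forward sensitivity $\partial n/\partial\lambda_R$ along each schedule. I would work with the sensitivity equations obtained by differentiating \cref{ode_model} with respect to $\lambda_R$. Write $u=\partial s/\partial\lambda_R$ and $w=\partial r/\partial\lambda_R$, with $u(0)=w(0)=0$ because the initial conditions do not depend on $\lambda_R$. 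This gives the linear nonautonomous system
\begin{align*}
\dot u &= \big[\lambda_S(1-n)(1-\gamma_D T_\mathrm{x})-\gamma_T\big]u-\lambda_S s(1-\gamma_D T_\mathrm{x})(u+w),\\
\dot w &= r(1-n)+\big[\lambda_R(1-n)-\gamma_T\big]w-\lambda_R r(u+w).
\end{align*}
The only inhomogeneous forcing is $r(1-n)$ in the $w$ equation. The plan is to show that the size of $(u,w)$ is controlled by the size of this forcing, which is $O(\sup r)$.

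For adaptive therapy, the hypothesis gives $r(t)=O(f_0)$ on the finite observation window $[0,t_N]$. The forcing is then $O(f_0)$, and $0\le n\le 1$ because $K=1$. All coefficients of the homogeneous part are bounded uniformly in $f_0$, in terms of $\lambda_S,\gamma_D,\gamma_T,\lambda_R$ and $t_N$. Variation of constants together with Gr\"onwall's inequality then gives
\[
\sup_{[0,t_N]}(|u|+|w|)\le C\,t_N\,\mathrm{e}^{Lt_N}\sup r=O(f_0).
\]
Hence $\partial n/\partial\lambda_R=u+w=O(f_0)$ at every measurement time. Summing a fixed number of squared terms gives $(F_{\mathrm{AT}})_{\lambda_R\lambda_R}=O(f_0^2)$.

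One point needs care. The adaptive schedule is itself state dependent, so $T_\mathrm{x}$ might move when $\lambda_R$ is perturbed. I would treat the schedule realised at $\bm\theta^\ast$ as a fixed input when computing $J$, which is the convention already used for $G_\mathcal{S}$. Alternatively, one can note that switching times are generically locally constant because decisions are made at discrete times $t_i$, so the derivative exists and is taken with $T_\mathrm{x}$ frozen.

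For continuous therapy, I expect the lower bound $(F_{\mathrm{cont}})_{\lambda_R\lambda_R}=O(1)$, meaning bounded below independently of $f_0$, to be the main obstacle. We need to show that $\partial n/\partial\lambda_R$ is of order one at some observation time, which requires the resistant population to reach $O(1)$ size under continuous treatment. I would argue in the relapse regime, where $(0,1-\gamma_T/\lambda_R)$ is stable under $T_\mathrm{x}=1$, matching the relapse scenario of \cref{sec:relapse}. There $s\to0$ and $r$ approaches a plateau of size $1-\gamma_T/\lambda_R$. Its derivative with respect to $\lambda_R$ is $\gamma_T/\lambda_R^2\neq0$, so once the trajectory is near this equilibrium, $\partial n/\partial\lambda_R\approx\gamma_T/\lambda_R^2=O(1)$.

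The subtlety is that the time to approach the plateau grows like $\log(1/f_0)$. I would therefore state the comparison for observation windows long enough to contain resistance release, by contrast with the adaptive case where this is excluded by hypothesis. An alternative is to rescale time to the release time, noting that the sensitivity $w$ picks up a factor proportional to elapsed time during exponential growth, which only helps the lower bound. Continuity of the sensitivity in $t$ then gives a window of $O(1)$ length contributing $O(1)$ to the sum.

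Finally, taking the ratio in \cref{eq:rho} with $\mathbf{v}=e_{\lambda_R}$ gives $\rho_{\mathrm{AT}\to\mathrm{cont}}(e_{\lambda_R})=O(1)/O(f_0^2)=O(f_0^{-2})$, which diverges as $f_0\to0$.
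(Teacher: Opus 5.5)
Your adaptive-therapy bound is the paper's argument in a different wrapper: both control $\partial n/\partial\lambda_R$ by Gr\"onwall with a forcing of size $r$. The paper works through $r=r_0\mathrm{e}^{E(t)}$ and $s=s_0\mathrm{e}^{D(t)}$. It first argues $\partial E/\partial\lambda_R=O(1)$, so that $\partial r/\partial\lambda_R=O(r)$, and then closes an integral inequality for $\partial n/\partial\lambda_R$ via $\partial D/\partial\lambda_R$. You apply Gr\"onwall once to the linear sensitivity system for $(u,w)$, whose only inhomogeneity is $r(1-n)$, which avoids that two-step bootstrap. You also handle the state dependence of the adaptive schedule by freezing the realised $T_\mathrm{x}$, which the paper does not address.

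The real difference is the continuous-therapy entry. The paper simply asserts that ``resistance is released so that $r=O(1)$''. Yet its supplementary convention fixes the window $[0,t]$ and lets only $f_0\to0$. Your observation that release takes time of order $\log(1/f_0)$ exposes a problem with that convention. Since $\dot r\le\lambda_R r$, one has $r(t)\le f_0 n_0\mathrm{e}^{\lambda_R t}=O(f_0)$ on any fixed window under any schedule. The paper's own Gr\"onwall bound would then give $(F_{\mathrm{cont}})_{\lambda_R\lambda_R}=O(f_0^2)$, not $O(1)$.

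Your route is what actually delivers an $f_0$-independent lower bound. You restrict to the relapse regime $\lambda_R>\max(\gamma_T,\lambda_S(1-\gamma_D))$, and this restriction is genuinely needed: outside it $r$ stays $O(f_0)$ under continuous therapy too. You then let the continuous window have length $\gtrsim\log(1/f_0)$, so that $\partial n/\partial\lambda_R$ approaches the plateau value $\gamma_T/\lambda_R^2$.

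The cost is that the two Fisher informations are computed over windows of different lengths. If you insisted on equal windows, your factor $t_N\mathrm{e}^{Lt_N}$ would no longer be uniform in $f_0$. The adaptive bound would then need a sharper estimate than crude Gr\"onwall. Your proposal only sketches the uniform-in-$f_0$ convergence of the sensitivity after release, but that is still more than the paper provides for this half of the statement.
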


	\noindent A proof is available in the supplementary material.
	Continuous and intermittent targets depend on the resistant growth rate $\lambda_R$, along which $\rho_{\mathrm{AT}\to\schnew}\gg1$ (\cref{prop:weak_lambdaR}), and are therefore non-transportable from adaptive source data.
	This is most severe precisely when adaptive therapy works best: the smaller the resistant fraction $f_0$, the more uncertainty in estimates of $\lambda_R$. 
	Since $\lambda_R$ also governs whether relapse occurs (\cref{sec:relapse}), its weak identifiability limits prediction of the clinical outcomes of interest.

	\begin{table}[h!]
	\centering
	\begin{tabular}{l cccc}
	\toprule
	& \multicolumn{4}{c}{\textbf{target}}\\
	\cmidrule(lr){2-5}
	\textbf{source} & no drug & adaptive & intermittent & continuous \\
	\midrule
	no drug                & $\checkmark$ & $\times$       & $\times$       &$\times$ \\
	adaptive               & $\checkmark$ & $\checkmark$   & $\times^{\ast}$ & $\times^{\ast}$\\
	intermittent$^{\dagger}$ & $\checkmark$ & $\checkmark$   & $\checkmark$   & $\checkmark$\\
	continuous             & $\times$     & $\times$       & $\times$       & $\checkmark$ \\
	\bottomrule
	\end{tabular}
	\caption{
	\textbf{The transportability matrix.}
	Each entry indicates whether parameters calibrated on an observed source schedule can be transported to predict a target schedule, under the model in \cref{ode_model}.
	Markers indicate transportable ($\checkmark$), structurally non-transportable ($\times$), and practically non-transportable ($\times^{\ast}$) schedules.
    Diagonal entries are trivially transportable. 
	Structural non-transportability is exact even with noise-free data, whereas practical non-transportability arises only under measurement noise (\cref{prop:weak_lambdaR}).
	$^{\dagger}$The intermittent row assumes a source that spans both low- and high-resistance regimes; a resistance-suppressing intermittent source behaves like the adaptive row.
	}
	\label{tab:matrix}
	\end{table}

	\subsubsection{Misspecification}
	\label{sec:transport-mis}
	In our original model \cref{ode_model}, we assumed that the resistant population is not directly affected by treatment. Here we instead consider a ground-truth scenario in which the proliferation rate of resistant cells is reduced by 50\% under treatment, so that
    \begin{equation}\label{ode_model2}
	\begin{aligned}
		\dv{s(t)}{t} &= \lambda_S s(t) \left(1 - \dfrac{n(t)}{K}\right) (1 - \gamma_D T_\mathrm{x}(t)) - \gamma_T s(t),\\
		\dv{r(t)}{t} &= \lambda_R r(t) \left(1 - \dfrac{n(t)}{K}\right) (1 - 0.5 T_\mathrm{x}(t))- \gamma_T r(t),
	\end{aligned}
	\end{equation}
    where we take $T_\mathrm{x}(t) = 1$ if a patient is on treatment, and $T_\mathrm{x}(t) = 0$ if a patient is off treatment. In this section, we treat \cref{ode_model2} as the true data-generating process and investigate the ability of the assumed model \cref{ode_model} both to reproduce an observed adaptive schedule and to make cross-schedule predictions. The two models differ only in how they treat the resistant proliferation rate: the assumed model uses the same rate on and off treatment, whereas the true process reduces it during treatment. 
    The best single-rate fit to a source schedule is then neither $\lambda_R^{\mathrm{on}} := 0.5\,\lambda_R$ nor $\lambda_R^{\mathrm{off}} := \lambda_R$, but an effective rate $\bar\lambda_R$ that reproduces the  resistant growth over the observation window.
    To a good approximation (when $1-n/K$ varies little over the observation window), $\bar\lambda_R$ is the average of the on- and off-treatment rates weighted by the on-treatment fraction $\varphi=T_{\mathrm{on}}/T$, 
	\begin{equation}\label{eq:effrate}
	\bar\lambda_R
	\approx\varphi\,\lambda_R^{\mathrm{on}}+(1-\varphi)\,\lambda_R^{\mathrm{off}}. 
	\end{equation}
	Since $\varphi$ differs between schedules, the effective rate $\bar\lambda_R$ is schedule-specific. 
	Inferring model parameters from data generated under a source schedule $\schobs$ (with on-treatment fraction $\varphi_\schobs$) yields the effective rate $\bar\lambda_R^{(\schobs)}$. 
    Accurately predicting the dynamics under a different target schedule $\schnew$ (with its own on-treatment fraction $\varphi_\schnew\neq\varphi_\schobs$) would require $\bar\lambda_R^{(\schnew)}$, which inference on the source does not provide. 
	Transporting a source fit from $\schobs$ to a target $\schnew$ thus incurs the misspecification transport bias $\mu_{\schobs \rightarrow \schnew}$,
	\begin{equation}\label{eq:misspec_error}
	\mu_{\schobs \to \schnew}=\bar\lambda_R^{(\schobs)}-\bar\lambda_R^{(\schnew)}\approx(\varphi_\schobs-\varphi_\schnew)(\lambda_R^{\mathrm{on}}-\lambda_R^{\mathrm{off}}),
	\end{equation}
	which we expect to be non-zero whenever the schedules differ in on-treatment fraction ($\varphi_\schobs\neq\varphi_\schnew$) and the growth rate of the resistant population is genuinely drug-dependent ($\lambda_R^{\mathrm{on}}\neq\lambda_R^{\mathrm{off}}$).
	No choice of source schedule removes this bias: the model itself is misspecified.

\enlargethispage{\baselineskip}
    To illustrate these results, we generated synthetic data from the ground-truth dual-rate model (\cref{ode_model2}), using the MAP estimate from Patient 99 (considered earlier) under an AT-50 schedule on a 14-day decision interval. The assumed, misspecified single-rate model (\cref{ode_model}) reproduces the dynamics under adaptive therapy (\cref{fig9}a). However, \cref{fig9}b,c show that cross-schedule predictions are unreliable.

	\begin{figure}[!t]
		\centering
		\includegraphics{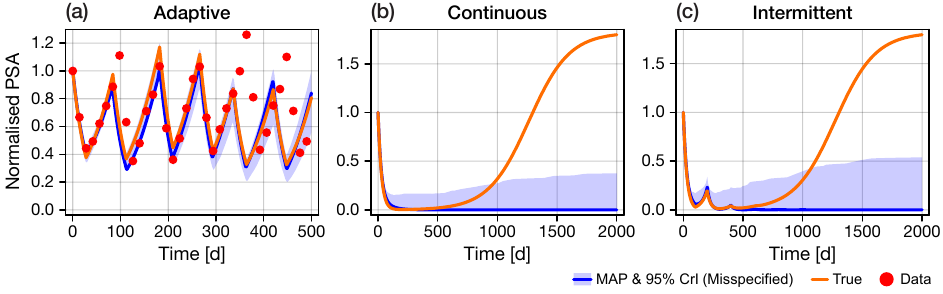}
		\caption[Figure 9]{\textbf{Cross-schedule predictions under misspecification.} 
        Synthetic data are generated from the true dual-rate model (\cref{ode_model2}) under an AT-50 schedule (14-day decision interval), using the Patient~99 MAP estimate. The misspecified single-rate model (\cref{ode_model}) is then calibrated to these data.
        (a) Data with the calibrated single-rate fit (MAP and 95\% credible interval). (b,c) Single-rate posterior predictions under (b) continuous therapy and (c) an intermittent schedule (100 days on, 100 days off), each compared with the dual-rate ground truth.
        }
		\label{fig9}
	\end{figure}

\section{Discussion and conclusions}

Adaptive therapy has emerged as a promising alternative to treatments based on the continuous application of a maximum tolerated dose. 
In practice, however, inter-patient heterogeneity and potentially significant measurement uncertainty can impact the application and eventual outcome of any adaptive treatment regime. In this work, we incorporate these sources of uncertainty into a mathematical and statistical analysis of adaptive therapy. By calibrating a simple competition model to longitudinal PSA measurements, we construct a virtual cohort of patients that captures intra-patient heterogeneity, within-patient uncertainty, and measured biomarker uncertainty. Within our framework, we find that adaptive therapy can still provide a TTP benefit compared to continuous treatment. Simultaneously, our results highlight several limitations in the application of deterministic mathematical models to predict patient outcomes.

A high likelihood of adaptive therapy providing prolonged TTP follows directly from the ecological assumptions underlying the mathematical model describing the tumour dynamics. Adaptive therapy exploits competition between sensitive and resistant subpopulations by maintaining a sufficiently large sensitive population to suppress resistant growth. Consequently, in a modelling framework whereby resistance carries a fitness cost and tumour dynamics are governed by competitive interactions, adaptive schedules are expected to delay inevitable progression relative to continuous treatment, with the magnitude of this improvement dependent upon parameter values. Our analysis, however, indicates that progression is not always inevitable: our posterior probability of relapse, determined from the stability of model equilibria, is approximately 43\%. The majority of patients in our virtual cohort, therefore, would likely benefit from continuous treatment which, in these cases, would lower tumour burden and our new metric, TTM, without the risk of progression. In future, our statistical framework could be applied prospectively to predict whether a patient is likely to benefit from adaptive therapy before they have finished their course of treatment \cite{Browning.2024a}.

Our pseudo-hierarchical Bayesian approach yields a virtual cohort of patients that captures both heterogeneity between patients and residual uncertainty in parameter estimates. Both sources of uncertainty are carried across to clinically relevant predictions, which comprise distributions and not point estimates. Our additional incorporation of measurement uncertainty allows us to quantify the variability in outcomes both within individual patients and across our entire cohort. As we demonstrate in \cref{sec:transport}, inferring a full range of possibilities is critical since point estimates may not readily transport between different treatment regimes. Most obviously, data from patients that do not experience any lapse in treatment cannot be used to predict adaptive nor intermittent therapies. Less obvious is that data from intermittent or adaptive therapy will not give informative predictions unless a patient's tumour is observed when both sufficiently resistant and sufficiently sensitive: a requirement that is impossible to verify retrospectively. Even accounting for residual parametric uncertainty through our Bayesian approach, our predictions are limited by our choice of model. As we show in \cref{fig9}, an adequate fit may not translate to reliable predictions. Future work should further consider model misspecification and, potentially, multimodel predictions \cite{Linden-Santangeli.2024}.

All population-level inferences made through our virtual cohort are necessarily biased by the underlying clinical data, including both the original recruitment of patients onto the clinical trial and our exclusion of patients for whom an adequate model fit could not be obtained. In particular, 12 of the 85 patients (14\%) who met the five-measurement threshold yet had PSA dynamics that the statistical model could not capture well, and thus were removed. Our cohort is, therefore, additionally restricted by construction to patients represented within our framework, which risks overstating both its predictive reliability and the apparent benefit of adaptive therapy. Such selection effects echo a central concern in Mistry's critique of the initial adaptive therapy trial \cite{Mistry.2021} that the reported superiority of adaptive over continuous dosing rests on comparisons to cohorts that are prognostically unmatched and, in one comparison, restricted to patients with an initial PSA decline of more than 50\% \cite{Zhang.2017}. Particularly in light of the response from Zhang et al. \cite{Zhang.2021}, our work demonstrates how sources of uncertainty---including the transferability of parameters between treatment regimes, model and parametric uncertainty, and finally measurement uncertainty---should be considered when assessing the efficacy of adaptive therapy.

The focus of our analysis is on an established mathematical model paired with a relatively simple measurement process that captures the imperfect correlation between measured PSA and GTV. Distinguishing our statistical model from others that consider either independent noise or (in some sense, equivalently) a least-squares estimate is that measurements are subject to temporal autocorrelation \cite{Strobl.2021}. Importantly, this means that repeatedly sampling a patient's PSA does not necessarily lead to more accurate measurements of GTV: this is an implicit assumption behind uncorrelated noise models that is inappropriate to describe the relationship between PSA and GTV. This autocorrelation can also be motivated by considering an extended model in which secretion and degradation of PSA are modelled explicitly \cite{Brady-Nicholls.2020}, and allows us to simulate future schedules using only an ODE model, and to perform inference with a relatively small computational cost. 

An assumption underlying 
many deterministic models of cancer is that the underlying dynamics remain fixed. In cancer, where the timescales often stretch into years (recall that the average duration in our dataset is more than four years), this is unlikely to hold true. We demonstrate two stochastic extensions to our model that illustrate how this additional uncertainty could be incorporated without significantly increasing model complexity. First, motivated by observations that appear to show differences in niche size across treatment windows (\cref{fig7}), we consider a carrying capacity that experiences both short-term fluctuations and long-term drift. For a fixed parameter set, it is clear that this simple change can introduce significant, and potentially realistic, variability in a patient's GTV trajectory. Other future extensions should include, for example, the development of multiple independent niches: circulating PSA gives only a measure of \textit{total} GTV, whereas the model assumes competition within only a single niche. The development of multiple lesions through, for example, metastasis would change the underlying dynamics. Additionally, our model does not capture spontaneous mutations to the resistant state \cite{Monro.2009}, an additional risk factor associated with potentially large drug-sensitive tumours.

The second extension we consider is an explicit model of stochasticity in PSA, where the secretion rate (assumed to be identical between sensitive and resistant cells) fluctuates, with these fluctuations forming the dominant source of variability in measured PSA in our model. This relatively straightforward extension leads to a stochastic differential equation model that can be solved semi-analytically to derive a temporally correlated noise model with time-dependent variance. Importantly, this leads to a relatively small additional computational cost, compared with the stochastic carrying capacity model, the latter requiring a computationally-expensive particle filter approach for inference. This model also demonstrates how even precise measurements of PSA are an imperfect measure of GTV, with the peak in PSA occurring after that of GTV: in an adaptive schedule, this would mean that thresholds based on PSA lead to a small delay in treatment decisions compared with hypothetical direct measurements of GTV. We emphasise that our two extensions are not intended to be exhaustive, but rather serve to motivate the development of models that capture additional sources of clinically relevant uncertainty.

Overall, our results demonstrate that uncertainty can have important ramifications for predicted treatment outcomes. While adaptive therapy remains a promising treatment paradigm, reliable patient-specific prediction requires robust uncertainty quantification of model parameters, the model structure, and potentially also the underlying dynamics which are likely to evolve in time. Importantly, outcomes are also multi-dimensional: time-to-progression captures neither the metastasis risk of a sustained tumour burden nor the reduced cost and toxicity that adaptive therapy offers. Weighing these trade-offs under genuine uncertainty is requisite when establishing whether, and for whom, adaptive therapy is truly beneficial.

\section*{Acknowledgements}

We thank Adriana Zanca and Adrianne Jenner for helpful discussions.
SH was supported by Wenner-Gren Stiftelserna/the Wenner-Gren Foundations (WGF2022-0044) and the Kjell och M{\"a}rta Beijer Foundation. RMC is supported by the Engineering and Physical Sciences Research Council (EP/Z534870/1). 

\section*{Data availability}

Code used to produce the results is available at \url{https://github.com/ap-browning/robust-adaptive-therapy}.


\clearpage
\section*{\centering Supplementary Material}

\renewcommand{\theequation}{S\arabic{equation}}
\renewcommand{\thesection}{S\arabic{section}}
\renewcommand{\thefigure}{S\arabic{figure}}
\renewcommand{\thetable}{S\arabic{table}}
\setcounter{section}{0}
\setcounter{equation}{0}

\section{Equilibria of the two-species Lotka-Volterra model}

	We are interested in the stability of our model in the case that $T_\mathrm{x}(t) = 1$ (e.g., the treatment is on). In this case, the dynamics are given by
	\begin{equation*}\label{ode_model_Txon}
	\begin{aligned}
		\dv{s(t)}{t} &= \lambda_S s(t) \left(1 - \dfrac{n(t)}{K}\right) (1 - \gamma_D) - \gamma_T s(t),\\
		\dv{r(t)}{t} &= \lambda_R r(t) \left(1 - \dfrac{n(t)}{K}\right) - \gamma_T r(t).
	\end{aligned}
	\end{equation*}
	The system has three equilibria, given by
	\begin{align*}
		(s_1^*,r_1^*) &= (0,0),\\
		(s_2^*,r_2^*) &= K\left(1 - \dfrac{\gamma_T}{\lambda_S(1 - \gamma_D)},0\right),\\
		(s_3^*,r_3^*) &= K\left(0,1 - \dfrac{\gamma_T}{\lambda_R}\right).
	\end{align*}
	R3elapse will only occur when the stable equilibrium supports a non-zero population of resistant cells. As such, at $(s_3^*,r_3^*)$, the Jacobian is given by
		\begin{equation}
			\begin{pmatrix}
				-\dfrac{\gamma_T(\lambda_R - \lambda_S(1 - \gamma_D)}{\lambda_R} & 0 \\
				\gamma_T - \lambda_R & \gamma_T - \lambda_R
			\end{pmatrix},
		\end{equation}
	which clearly has eigenvalues given by
		\begin{align*}
			\lambda_1 &= -\dfrac{\gamma_T(\lambda_R - \lambda_S(1 - \gamma_D))}{\lambda_R},\\
			\lambda_2 &= \gamma_T - \lambda_R.
		\end{align*}
	As all parameters are rates and are, therefore, positive, relapse will eventually occur provided
		\begin{equation}
			\lambda_R > \max\left( \gamma_T, \lambda_S(1 - \gamma_D) \right).
		\end{equation}
	That is, relapse will occur if the resistant population is self sustaining and if the intrinsic resistant cell growth rate is greater than the growth rate of sensitive cells on drug.

\clearpage
\section{Derivation of stochastic PSA secretion likelihood}

	Here, we demonstrate that
		\begin{equation*}\label{stoch_psa_model_cov}
			\mathrm{cov}(p(t_1),p(t_2)) = \dfrac{\eta_\alpha}{2\xi_\alpha} \Big(A_+(t_1)A_-(t_2) + B_+(t_1) - B_-(t_1)\Big),
		\end{equation*}
	where
		\begin{equation}
			\dv{A_\pm(t)}{t} = n(t) \mathrm{e}^{\pm \xi_\alpha t}\qquad\text{and}\qquad\dv{B_\pm}{t} = A_\pm(t) n(t) \mathrm{e}^{\mp \xi_\alpha t},
		\end{equation}
	subject to $A_\pm(0) = B_\pm(0) = 0$.

	First, it is helpful to note that
	\begin{equation}
		A_\pm(t) = \int_0^t n(s) \mathrm{e}^{\pm \xi_\alpha s}\,\dd s\qquad \text{and}\qquad 
		B_\pm(t) = \int_0^t A_\pm(s) n(s) \mathrm{e}^{\mp \xi_\alpha s}\,\dd s.
	\end{equation}
	Now, consider that
		\begin{align*}
			\dfrac{2\xi_\alpha}{\eta_\alpha} \mathrm{cov}(p(t_1),p(t_2)) &= \int_0^{t_1}\int_0^{u_1} n(u_1)n(u_2) \mathrm{e}^{-\xi_\alpha(u_1 - u_2)} \dd u_2 \dd u_1 \\&\qquad\quad+ \int_0^{t_1}\int_{u_1}^{t_2} n(u_1)n(u_2) \mathrm{e}^{-\xi_\alpha(u_2 - u_1)} \dd u_2 \dd u_1,\\
			&= \int_0^{t_1}n(u_1)\mathrm{e}^{-\xi_\alpha u_1}\int_0^{u_1} n(u_2) \mathrm{e}^{\xi_\alpha u_2} \dd u_2 \dd u_1\\&\qquad\quad+ \int_0^{t_1}n(u_1) \mathrm{e}^{\xi_\alpha u_1}\int_{u_1}^{t_2} n(u_2) \mathrm{e}^{-\xi_\alpha u_2} \dd u_2 \dd u_1,\\
			&= \int_0^{t_1}n(u_1)\mathrm{e}^{-\xi_\alpha u_1} A_+(u_1)\, \dd u_1 + \int_0^{t_1}n(u_1) \mathrm{e}^{\xi_\alpha u_1} \left(A_-(t_2) - A_-(u_1) \right) \dd u_1,\\
			&= B_+(t_1) + A_-(t_2)  \int_0^{t_1}n(u_1) \mathrm{e}^{\xi_\alpha u_1} \dd u_1 - \int_0^{t_1}n(u_1) \mathrm{e}^{\xi_\alpha u_1} A_-(u_1) \dd u_1,\\
			&= B_+(t_1) + A_+(t_1) A_-(t_2) - B_-(t_1),
		\end{align*}
	which gives the desired result.

\clearpage
\section{Details on the transportability condition (Lemma 1)}
\paragraph{Sensitivity matrix and nullspace.}
The sensitivity matrix $J_\schobs$ is the $N\times p$ matrix ($N$ is the number of time points available for the patient, and $p=\dim\bm\theta$) with entries
\begin{equation}
(J_\schobs)_{ij} = \left.\frac{\partial G_\schobs(\bm\theta;t_i)}{\partial\theta_j}\right|_{\bm\theta^\ast}
= \left.\frac{\partial n(t_i)}{\partial\theta_j}\right|_{\bm\theta^\ast}.
\end{equation}
Here $\theta_j\in\{\lambda_S,\lambda_R,\gamma_D,\gamma_T,f_0\}$, so that $p=5$.
Each column is obtained by solving the forward-sensitivity equations, found by differentiating \cref{ode_model} with respect to $\theta_j$.
The nullspace of $J_\schobs$, denoted by $\mathcal{N}_\schobs=\ker J_\schobs$, collects the parameter perturbations $\delta\bm\theta$ satisfying $J_\schobs\,\delta\bm\theta=0$. These are the parameter directions that leave the observable from the source schedule unchanged to first order, i.e., the locally structurally non-identifiable directions used in \cref{lem:transport}.

\paragraph{Lemma 1.}
\begin{proof}
First, consider the noise-free source data $G_\schobs(\bm\theta^\ast;t_i)$, $i=1,\dots,N$.  
A model parameter set $\bm\theta^\ast+\delta\bm\theta$ reproduces the data values iff $G_\schobs(\bm\theta^\ast+\delta\bm\theta;t_i)=G_\schobs(\bm\theta^\ast;t_i)$ $\forall i$. 
A first-order Taylor expansion of $G_\schobs$ about $\bm\theta^\ast$ gives
\begin{equation}
G_\schobs(\bm\theta^\ast+\delta\bm\theta;t_i)=G_\schobs(\bm\theta^\ast;t_i)+\big[J_\schobs\,\delta\bm\theta\big]_i+O(\|\delta\bm\theta\|^2),
\end{equation}
thus the data are reproduced to leading order iff $J_\schobs\,\delta\bm\theta=0$, i.e.,\ $\delta\bm\theta\in\ker J_\schobs=\mathcal{N}_\schobs$. 
The source data therefore determine $\bm\theta$ only up to an additive element of $\mathcal{N}_\schobs$: they cannot distinguish $\bm\theta^\ast$ from $\bm\theta^\ast+\delta\bm\theta$ for any $\delta\bm\theta\in\mathcal{N}_\schobs$, pinning $\bm\theta$ down only to the set $\bm\theta^\ast+\mathcal{N}_\schobs$.
\medskip

\noindent Second, a first-order Taylor expansion of a scalar target quantity $g_\schnew$ about $\bm\theta^\ast$ gives
\begin{equation}
g_\schnew(\bm\theta^\ast+\delta\bm\theta)=g_\schnew(\bm\theta^\ast)+\nabla g_\schnew(\bm\theta^\ast)^\intercal\delta\bm\theta+O(\|\delta\bm\theta\|^2).
\end{equation}
Across all data-consistent parameters (i.e.,\ those with $\delta\bm\theta\in\mathcal{N}_\schobs$), 
$g_\schnew$ takes the same value to leading order iff 
$\nabla g_\schnew(\bm\theta^\ast)^\intercal\delta\bm\theta=0$ 
for every $\delta\bm\theta\in\mathcal{N}_\schobs$. 
This holds precisely when $\nabla g_\schnew(\bm\theta^\ast)\perp\mathcal{N}_\schobs$. 
Otherwise $g_\schnew$ varies across the data-consistent set: 
a fit displaced from the truth by a $\delta\bm\theta\in\mathcal{N}_\schobs$ yields a value differing from the true $g_\schnew(\bm\theta^\ast)$ by $\nabla g_\schnew(\bm\theta^\ast)^\intercal\delta\bm\theta$, which is the transport error. 
\end{proof}

\clearpage
\section{Weak identifiability of resistant growth (Proposition 1)}

\medskip
\noindent Throughout, we let $O(\cdot)$ describe how big a quantity is relative to the quantity in its argument: $X=O(Y)$ means $X$ is of size $Y$. For example, as $f_0\to0$, $O(1)$ stays of order one, $O(f_0)$ vanishes, and $O(f_0^{-2})$ diverges; and $X=O(r(t))$ means $X$ is of the size of the resistant population $r(t)$. The observation window $[0,t]$ is fixed and finite; the only limit we take is $f_0\to0$.
\\

\noindent{\bf Proposition 1.}

\begin{proof}
Integrating the equations for $r(t)$ and $s(t)$ in \cref{ode_model} gives
\begin{equation}
r(t)=r_0\,e^{E(t)}, \qquad s(t)=s_0\,e^{D(t)},
\label{eq:supmat_sols}
\end{equation}
where
\begin{align}
E(t)&=\int_0^t\big[\lambda_R\big(1-n(\tau)/K\big)-\gamma_T\big]\,\dd\tau,\\
D(t)&=\int_0^t\big[\lambda_S\big(1-n(\tau)/K\big)\big(1-\gamma_D T_\mathrm{x}\big)-\gamma_T\big]\,\dd\tau,
\end{align}
and $n(t)=s(t)+r(t)$. The parameter $\lambda_R$ appears explicitly in $E(t)$ but not in $D(t)$. It drives the resistant population directly, and the sensitive population indirectly through the shared burden $n(t)$.

Differentiating \cref{eq:supmat_sols} with respect to $\lambda_R$ by the chain rule, the resistant sensitivity is $O(r)$ since
\begin{equation}
\frac{\partial r(t)}{\partial\lambda_R}
= r_0\,e^{E(t)}\,\frac{\partial E(t)}{\partial\lambda_R}
= \underbrace{r(t)}_{O(r)}\,\underbrace{\frac{\partial E(t)}{\partial\lambda_R}}_{O(1)}
= O(r),
\end{equation}
where
\begin{equation}
\frac{\partial E(t)}{\partial\lambda_R}
= \int_0^t\!\Big[\big(1-\tfrac{n(\tau)}{K}\big) - \frac{\lambda_R}{K}\frac{\partial n(\tau)}{\partial\lambda_R}\Big]\dd\tau
\end{equation}
has a bounded integrand ($0\le n\le K$, and $\partial n/\partial\lambda_R$ is bounded on $[0,t]$ because a linear ODE with bounded coefficients on a finite interval has a bounded solution) and is hence $O(1)$.

The sensitivity of the sensitive population has a bounded prefactor, $s(t)=O(1)$ (since $s\le K=1$),
\begin{equation}
\frac{\partial s(t)}{\partial\lambda_R}
= \underbrace{s(t)}_{O(1)}\,\frac{\partial D(t)}{\partial\lambda_R},
\label{eq:ds}
\end{equation}
so any $\lambda_R$-dependence of $s(t)$ enters through $\partial D(t)/\partial\lambda_R$, which is coupled to $\partial n(t)/\partial\lambda_R$ by
\begin{equation}
\frac{\partial D(t)}{\partial\lambda_R}
= -\frac{\lambda_S}{K}\int_0^t\big(1-\gamma_D T_\mathrm{x}\big)\,\frac{\partial n(\tau)}{\partial\lambda_R}\,\dd\tau .
\label{eq:dD}
\end{equation}
We therefore bound the total sensitivity $\partial n(t)/\partial\lambda_R = \partial s(t)/\partial\lambda_R + \partial r(t)/\partial\lambda_R$, using the coupling \cref{eq:dD}, rather than $\partial s(t)/\partial\lambda_R$ alone.
Substituting the coupling \cref{eq:dD} into \cref{eq:ds}, and using $\partial r(t)/\partial\lambda_R = O(r)$, the total sensitivity becomes
\begin{equation}
\frac{\partial n(t)}{\partial\lambda_R}
= \underbrace{\frac{\partial r(t)}{\partial\lambda_R}}_{O(r)}
- s(t)\,\frac{\lambda_S}{K}\int_0^t\big(1-\gamma_D T_\mathrm{x}\big)\,\frac{\partial n(\tau)}{\partial\lambda_R}\,\dd\tau .
\label{eq:Dn_DlambdaR}
\end{equation}
To bound the LHS in \cref{eq:Dn_DlambdaR}, we use Grönwall's inequality, which states that if a non-negative, continuous function $u(t)$ satisfies
\begin{equation}
u(t) \le \alpha + C\int_0^t u(\tau)\,\dd\tau
\end{equation}
for constants $\alpha,C\ge 0$, then $u(t) \le \alpha\,e^{Ct}$.
Taking absolute values in \cref{eq:Dn_DlambdaR}, and using the triangle inequality, $|a-b|\le|a|+|b|$, together with $\big|\int f\big|\le\int|f|$,
\begin{equation}
\begin{split}
\Big|\frac{\partial n(t)}{\partial\lambda_R}\Big|
&= \Big|\frac{\partial r(t)}{\partial\lambda_R} - s(t)\,\frac{\lambda_S}{K}\int_0^t\big(1-\gamma_D T_\mathrm{x}\big)\,\frac{\partial n(\tau)}{\partial\lambda_R}\,\dd\tau\Big| \\
&\le \Big|\frac{\partial r(t)}{\partial\lambda_R}\Big| + s(t)\,\frac{\lambda_S}{K}\int_0^t\big|1-\gamma_D T_\mathrm{x}\big|\,\Big|\frac{\partial n(\tau)}{\partial\lambda_R}\Big|\,\dd\tau \\
&\le \alpha + C\int_0^t\Big|\frac{\partial n(\tau)}{\partial\lambda_R}\Big|\,\dd\tau,
\end{split}
\end{equation}
where $\alpha=O(r)$ bounds $|\partial r(t)/\partial\lambda_R|$ and $C$ bounds the kernel $s(t)\lambda_S(1-\gamma_D T_\mathrm{x})/K$. This has the form required by Gr\"onwall's inequality, with $u(t)=|\partial n(t)/\partial\lambda_R|$, so
\begin{equation}
\Big|\frac{\partial n(t)}{\partial\lambda_R}\Big| \le \alpha\,e^{Ct} = O(r),
\label{eq:supmat_dn_dlambdaR}
\end{equation}
since $e^{Ct}$ is constant on the fixed window $[0,t]$.

By \cref{eq:supmat_dn_dlambdaR}, the total tumour burden's sensitivity to $\lambda_R$ scales with $r$. 
Under effective adaptive therapy, which keeps the resistant population suppressed, $r=O(f_0)$, so $\partial n(t_i)/\partial\lambda_R=O(f_0)$ and
\begin{equation}
(F_{\mathrm{AT}})_{\lambda_R\lambda_R}
= \sum_i J_{i,\lambda_R}\,J_{i,\lambda_R}
= \sum_i \Big(\frac{\partial n(t_i)}{\partial\lambda_R}\Big)^2
= \sum_i O(f_0)\cdot O(f_0)
= O(f_0^2).
\label{eq:F_AT}
\end{equation}
Under continuous therapy, however, resistance is released so that $r=O(1)$, giving
\begin{equation}
(F_{\mathrm{cont}})_{\lambda_R\lambda_R}
= \sum_i O(1)\cdot O(1)
= O(1).
\label{eq:F_cont}
\end{equation}
Combining \cref{eq:F_AT,eq:F_cont}, the transportability index satisfies
\begin{equation}
\rho_{\mathrm{AT}\to\mathrm{cont}}(e_{\lambda_R})
= \frac{(F_{\mathrm{cont}})_{\lambda_R\lambda_R}}{(F_{\mathrm{AT}})_{\lambda_R\lambda_R}}
= \frac{O(1)}{O(f_0^2)} = O(f_0^{-2}) \longrightarrow \infty \quad\text{as } f_0\to0. \qedhere
\end{equation}
\end{proof}

\end{document}